%
\documentclass[12pt]{llncs}

\usepackage{amsmath,amssymb}

\usepackage[all]{xy}
\usepackage{latexsym}

\usepackage{enumitem}
\usepackage{algorithmic}
\usepackage{algorithm}

\makeatletter
\newcommand{\todo}[1]{\marginpar{\textbf{TODO\footnotemark}}\@latex@warning{TODO: #1}\footnotetext{ #1}}
\makeatother

\newcommand{\up}[1]{[#1]}
\newcommand{\atomP}{\mathcal{AP}}
\newcommand{\atomQ}{\mathcal{AQ}}

\newcommand{\langCl}{\mathcal{L}_\mathsf{cl}}
\newcommand{\langB}{\mathcal{L}_\mathsf{B}}
\newcommand{\langM}{\mathcal{L}_\mathsf{M}}

\newcommand{\natN}{\mathbb{N}}
\newcommand{\natNP}{\mathbb{N}^+}
\newcommand{\seq}[1]{\langle#1\rangle}
\newcommand{\len}{\mathsf{len}}
\newcommand{\set}[1]{\mathsf{set}(#1)}
\newcommand{\apnd}{\circ}
\newcommand{\M}{\mathsf{M}}
\newcommand{\K}{\mathsf{K}}
\newcommand{\I}{\mathsf{I}}
\newcommand{\BC}{\mathsf{BC}}
\newcommand{\PU}{\mathsf{PU}}
\newcommand{\val}{\mathsf{v}}
\newcommand{\complete}{\mathsf{complete}}
\newcommand{\find}{\mathsf{find}}
\newcommand{\remove}{\mathsf{remove}}
\newcommand{\fml}{\mathsf{fml}}
\newcommand{\ind}{\mathsf{ind}}
\newcommand{\BCL}{\mathsf{BCL}}
\newcommand{\Acc}{\mathsf{Acc}}
\newcommand{\modelsCL}{\models_\mathsf{CL}}
\newcommand{\modelsM}{\models_\Box}
\newcommand{\tr}{\mathsf{t}}
\newcommand{\h}{\mathsf{h}}

\begin{document}
\frontmatter          
\mainmatter              
\title{A Logic of Blockchain Updates}
\titlerunning{Epistemic Blockchain Logic}  
%
\author{Kai Br\"unnler\inst{1} \and Dandolo Flumini\inst{2} \and Thomas Studer\inst{3}}
\authorrunning{Kai Br\"unnler et al.} 
%
%
\institute{
Bern University of Applied Sciences, Switzerland, 
\email{kai.bruennler@bfh.ch}
\and
ZHAW School of Engineering, Switzerland, 
\email{dandolo.flumini@zhaw.ch}
\and
University of Bern, Switzerland,
\email{tstuder@inf.unibe.ch}
}

\maketitle              

\begin{abstract}
Blockchains are distributed data structures that are used to achieve consensus in systems for cryptocurrencies (like Bitcoin) or smart contracts (like Ethereum). 
Although blockchains gained a lot of popularity recently, there is no logic-based model for blockchains available.
We introduce~$\BCL$, a dynamic logic to reason about blockchain updates, and show that $\BCL$ is sound and complete with respect to a simple blockchain model.
\keywords{blockchain, modal logic, dynamic epistemic logic}
\end{abstract}
\section{Introduction}

Bitcoin~\cite{nakamoto}
 is  a cryptocurrency that uses peer-to-peer technology to support direct user-to-user transactions without an intermediary such as a bank or credit card company. In order to prevent double spending, which is a common issue in systems without central control, Bitcoin maintains a complete and public record of all transactions at each node in the network.
This ledger is called the \emph{blockchain}. 

The blockchain is essentially a growing sequence of blocks, which contain approved transactions and a cryptographic hash of the previous block in the sequence. Because the blockchain is stored locally at each node, any update to it has to be propagated to the entire network. Nodes that receive a transaction first verify its validity (i.e., whether it is compatible with all preceeding transactions). If it is valid, then it is added to the blockchain and sent to all other nodes~\cite{Antonopoulos:2014,Sompolinsky2015}. Blockchain technology, as a general solution to the Byzantine Generals' Problem~\cite{Lamport:1982}, is now not only used for financial transactions but also for many other applications like, e.g., smart contracts~\cite{Buterin2013}.

Herlihy and Moir~\cite{Herlihy:2016} propose to develop a logic of accountability to design and verify blockchain systems. In particular, they discuss blockchain scenarios to test (i) logics of authorization, (ii) logics of concurrency, and (iii) logics of incentives.

In the present paper, we are not interested in accountability but study blockchains from the perspective of dynamic epistemic logic~\cite{ditmarsch:07}. 
A given state of the blockchain entails knowledge about the transactions that have taken place. We ask:
\emph{how does this knowledge change when a new block is received that might be added to the blockchain?}
We develop a dynamic logic, $\BCL$, with a semantics that is based on a blockchain model. 
The update operators of $\BCL$ are interpreted as receiving new blocks. 
It is the aim of this paper to investigate the dynamics 
of blockchain updates.

The deductive system for $\BCL$ includes reduction axioms that make it possible to establish completeness by a reduction to the update-free case~\cite{kooi}. However, since blockchain updates are only performed if certain consistency conditions are satisfied, we use conditional reduction axioms similar to the ones developed by Steiner to model consistency preserving updates~\cite{steiner}.
Moreover, unlike traditional public announcements~\cite{ditmarsch:07}, blockchain updates cannot lead to an inconsistent state, i.e.,  updates are total, like in~\cite{SteinerStuder}.

We do not base $\BCL$ on an existing blockchain implementation but use a very simple model. First of all, the blockchain is a sequence of propositional formulas. Further we  maintain a list of provisional updates. Our blocks consist of two parts: a sequence number (called the index of the block) and a propositional formula.
If a block is received, then the following case distinction is performed where $i$ is the index of the block and $l$ is the current length of the blockchain:
\begin{enumerate}
\item
$i \leq l$. The block is ignored.
\item 
$i = l+1$. If the formula of the block is consistent with the blockchain, then it is added to the blockchain; otherwise the block is ignored. If the blockchain has been extended, then this procedure is performed also with the blocks stored in the list of provisional updates.
\item
$i > l+1$. The block is added to the list of provisional updates.
\end{enumerate}
Although this is a simple model, it features two important logical properties of blockchains: consistency must be preserved and blocks may be received in the wrong order in which case they are stored separately until the missing blocks have been received. 

The main contribution of our paper from the point of view of dynamic epistemic logic is that we maintain a list of provisional updates. That means we support updates that do not have an immediate effect but that may lead to a belief change later only after certain other updates have been performed. $\BCL$ is the first dynamic epistemic logic that features provisional updates of this kind.

The paper is organized as follows.
The next section introduces our blockchain model, the language of $\BCL$, and its semantics.
In Section~\ref{sec:3}, we introduce a deductive system for $\BCL$.
We establish soundness of $\BCL$ in Section~\ref{sec:4}.
In Section~\ref{sec:5}, we show a normal form theorem for $\BCL$, which is used in Section~\ref{sec:6} to prove completeness of $\BCL$.
The final section studies some key principles of the epistemic dynamics
of our blockchain logic and discusses future work.




%
%
%
%
%
%
%
%
%
%
%
%
%
%

\section{A simple dynamic epistemic blockchain logic}

The set of all natural numbers is denoted by $\natN := \{0,1,2,\ldots\}$.
The set of positive natural numbers is denoted by $\natNP := \{1,2,\ldots\}$. We use $\omega$ for the least ordinal such that $\omega >n$, for all $n \in \natN$.

Let $\sigma = \seq{\sigma_1,\ldots,\sigma_n}$ be a finite sequence.
We define its \emph{length}\/  by $\len(\sigma) := n$.
For an infinite sequence $\sigma = \seq{\sigma_1,\sigma_2,\ldots}$ we set 
$\len(\sigma) := \omega$.
Further for a (finite or infinite) sequence $\sigma =  \seq{\sigma_1, \sigma_2, \ldots,\sigma_i,\ldots}$ we set $(\sigma)_i:= \sigma_i$.
The \emph{empty sequence}\/ is denoted by $\seq{}$ and we set 
$\len(\seq{}) := 0$.
We can append $x$ to a finite sequence $\sigma := \seq{\sigma_1,\ldots,\sigma_n}$, in symbols we set $\sigma \apnd x := \seq{\sigma_1,\ldots,\sigma_n,x}$.
We will also need the set of all components of a sequence $\sigma$ and define
\[
\set{\sigma} := \{ x \ |\ \text{there is an $i$ such that $x = \sigma_i$} \}.
\] 
In particular, we have $\set{\seq{}} := \emptyset$.
Moreover, we use the shorthand $x \in \sigma$ for  $x \in \set{\sigma}$.

We start with a countable set of atomic propositions~$\atomP:=\{P0, P1, \ldots\}$.
The set of formulas~$\langCl$ of classical propositional logic is given by the following grammar
\[
A ::= \bot \ |\  P \ |\ A \to A \quad,
\]
where $P \in \atomP$. 

In order to introduce the language~$\langB$ for blockchain logic, we need another countable set of special atomic propositions $\atomQ:=\{Q1, Q2, \ldots\}$ that is disjoint with $\atomP$.
We will use these special propositions later to keep track of the length of the blockchain.
The formulas of $\langB$  are now
given by the grammar
\[
F ::= \bot \ |\  P \ |\  Q \ |\  F \to F \ |\ \Box A \ |\ \up{i,A} F \quad,
\]
where  $P \in \atomP$, $Q \in \atomQ$, $A \in \langCl$, and $i \in \natNP$. The operators of the form $\up{i,A}$ are called \emph{blockchain updates} (or simply \emph{updates}).

Note that in $\langB$ we cannot express higher-order knowledge, i.e., we can only express knowledge about propositional facts but not knowledge about knowledge of such facts.

For all languages in this paper, we define further Boolean connectives (e.g.~for negation, conjunction, and disjunction) as usual. Moreover, we assume that unary connectives bind stronger than binary ones.
 
For $\langCl$ we use the semantics of classical propositional logic.
A \emph{valuation} $\val$ is a subset of  $\atomP$ and we define the truth of an $\langCl$-formula~$A$ under $\val$, in symbols $\val\models A$ as usual.
For a set $\Gamma$ of $\langCl$-formulas, we write $\val\models \Gamma$ if $\val\models A$ for all $A \in \Gamma$.
The set $\Gamma$ is \emph{satisfiable} if there is a valuation $\val$ such that $\val\models \Gamma$. We say $\Gamma$ \emph{entails} $A$, in symbols $\Gamma \models A$, if for each valuation $\val$ we have
\[
\val \models \Gamma \quad\text{implies}\quad \val \models A.
\]
Now we introduce the blockchain semantics for~$\langB$.
\begin{definition}
A \emph{block} is a pair $[i,A]$ where $A$ is an $\langCl$-formula and $i \in \natNP$. We call $i$ the \emph{index} and $A$ the \emph{formula} of the block $[i,A]$.
We define functions~$\ind$ and~$\fml$ by $\ind[i,A]:=i$ and $\fml[i,A]:=A$.
\end{definition}

\begin{definition}
A \emph{model} $\M := (\I, \BC, \PU, \val)$ is a quadruple where 
\begin{enumerate}
\item $\I$ is a set of $\langCl$-formulas
\item $\BC$ is a  sequence of $\langCl$-formulas
\item $\PU$ is a finite sequence of blocks
\item $\val$ is a valuation, i.e.~$\val \subseteq \atomP$
\end{enumerate}
 such that
\begin{equation}
\label{eq:model:1}
\text{$\I \cup \set{\BC}$ is satisfiable}
\end{equation}
and
\begin{equation}
\label{eq:model:2}
\text{for each block $[i,A] \in \PU$ we have $i> \len(\BC) +1$.}
\end{equation}
\end{definition}
The components of a model $ (\I, \BC, \PU, \val)$ have the following meaning:
\begin{enumerate}
\item $\I$ models initial background knowledge.
\item $\BC$ is the blockchain. 
\item $\PU$ stands for \emph{provisional updates}. The sequence $\PU$ consists of those blocks that have been announced but that could not yet be added to the blockchain because their index is too high. Maybe they will be added to $\BC$ later (i.e., after the missing blocks have been added). 
\item $\val$ states which atomic propositions are true.
\end{enumerate}

We need some auxiliary definition in order to precisely describe the blockchain dynamics.

\begin{definition}
\begin{enumerate}
\item Let\/ $\PU$ be a finite sequence of blocks. Then we let $\find(i,\PU)$ be the least~$j \in \natNP$ such that there is an $\langCl$-formula $A$ with $[i,A] = (\PU)_j$.
\item Let $\sigma = \seq{\sigma_1,\ldots,\sigma_{i-1},\sigma_{i},\sigma_{i+1},\ldots}$ be a  sequence. We set 
\[
\remove(i,\sigma) := \seq{\sigma_1,\ldots,\sigma_{i-1},\sigma_{i+1},\ldots}.
\]
\item Given a set of $\langCl$-formulas~$\I$, a sequence of\/~$\langCl$-formulas\/~$\BC$, and  
a finite sequence of blocks\/~$\PU$,  then the \emph{chain completion} $\complete(\I, \BC,\PU)$ is computed according to Algorithm~\ref{Alg:ChainCompletion}.
\end{enumerate}
\end{definition}

\begin{algorithm}[H]
	\caption{Chain Completion Algorithm: $\complete$ }
	\label{Alg:ChainCompletion}
	\begin{algorithmic}[1] 
		\REQUIRE $(\I, \BC,\PU)$
		\STATE{$n \gets \len(\BC)+1$}
		\WHILE{ $[n,A] \in \PU$ for some formula $A$}
			\STATE{$i \gets \find(n,\PU)$}
			\STATE{$B \gets \fml((\PU)_i)$}
			\STATE{$\remove(i,\PU)$}
			\IF{$\I \cup \set{\BC} \cup\{B\}$ is satisfiable}
				\STATE{$\BC \gets \BC \apnd B$}
				\STATE{$n \gets \len(\BC)+1$}
			\ENDIF
		\ENDWHILE
		\FOR {$i \in \len(\PU),\ldots,1$ }
			\IF{$\ind((\PU)_i) < n$}	 
				\STATE{$\remove(i,\PU)$}     
			\ENDIF
		\ENDFOR
		\RETURN $(\BC,\PU)$
	\end{algorithmic}
\end{algorithm}

Let us comment on the chain completion procedure. The numbers refer to the lines in Algorithm~\ref{Alg:ChainCompletion}.
\begin{itemize}[leftmargin=4em]
\item[1:] $n$ is the index a block must contain so that it could be added to the blockchain~$\BC$.
\item[2:] '$[n,A] \in \PU$ for some formula $A$' means that $\PU$ contains a block that could be added to $\BC$.
\item [3--5:] Find the next formula $B$ that could be added to $\BC$ and remove the corresponding block from $\PU$.
\item [6:] '$\I \cup \set{\BC} \cup\{B\}$ is satisfiable' means that $B$ is consistent with the current belief. This test guarantees that~$\eqref{eq:model:1}$ will always be satisfied.
\item [7,8:] Update the blockchain~$\BC$ with $B$.  
\item [11--15:] Remove all blocks from $\PU$ whose index is less than or equal to the current length of the blockchain~$\BC$. Because the blockchain never gets shorter, these block will never be added. Removing them guarantees that~$\eqref{eq:model:2}$ will always be satisfied.
\end{itemize}

Note if $\BC$ and $\PU$ satisfy condition~\eqref{eq:model:2} in the definition of a model, then the chain completion algorithm  will return $\BC$ and $\PU$ unchanged.

\begin{lemma}\label{l:welldef:1}
Let $\I$ be a set of  $\langCl$-formulas and
let  $\BC$ be a sequence of $\langCl$-formulas such that\/ $\I \cup \set{\BC}$ is satisfiable. Let $\PU$ be an arbitrary finite sequence of blocks.
For $(\BC',\PU'):= \complete(\I,\BC,\PU)$ we find that
\begin{enumerate}
\item $\I \cup \set{\BC'}$ is satisfiable and
\item for each block $[i,A] \in \PU'$ we have $i> \len(\BC') +1$.
\end{enumerate}
\end{lemma}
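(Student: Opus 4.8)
The plan is to verify that the two defining model conditions hold for the output $(\BC',\PU')$ of the chain completion algorithm, tracking the relevant invariants as the algorithm executes. I would split the argument according to the structure of Algorithm~\ref{Alg:ChainCompletion}: first the \textbf{while} loop (lines 2--10), which extends $\BC$ and shrinks $\PU$, and then the final \textbf{for} loop (lines 11--15), which only removes blocks from $\PU$.

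For part~(1), the key observation is that satisfiability of $\I \cup \set{\BC}$ is a loop invariant of the \textbf{while} loop. It holds initially by hypothesis. The only operation that changes $\BC$ is line~7, and it is guarded by the test on line~6, which fires exactly when $\I \cup \set{\BC} \cup \{B\}$ is satisfiable; since appending $B$ to $\BC$ via $\apnd$ changes $\set{\BC}$ only by adding $B$, the invariant is preserved. The \textbf{for} loop does not touch $\BC$ at all, so $\I \cup \set{\BC'}$ is satisfiable, giving part~(1).

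For part~(2), I would argue that when the algorithm exits the \textbf{while} loop, the value $n$ equals $\len(\BC')+1$ (each time $\BC$ is extended, line~8 resets $n$ accordingly, and $\BC$ is not modified afterwards). The exit condition of the \textbf{while} loop guarantees that $\PU$ contains no block with index $n = \len(\BC')+1$; combined with the \textbf{for} loop, which deletes every remaining block whose index is strictly less than $n$, we conclude that every block surviving in $\PU'$ has index strictly greater than $n$, i.e.\ $i > \len(\BC')+1$, which is exactly part~(2).

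The main obstacle I anticipate is not either conclusion in isolation but rather \textbf{establishing that the algorithm terminates}, so that the ``output'' $(\BC',\PU')$ is well defined in the first place (as the lemma's phrasing presupposes). Each pass through the \textbf{while} loop removes exactly one block from $\PU$ via line~5, so the length of $\PU$ strictly decreases and the loop runs at most $\len(\PU)$ times; the \textbf{for} loop is manifestly finite. I would state this termination argument explicitly, since the two satisfiability and index claims only make sense once the computation is known to halt. A secondary subtlety worth checking is that the index-based invariant interacts correctly with the fact that $B$ may fail the test on line~6 (in which case $n$ is unchanged and the same index can be searched again, but the block bearing it has already been removed), so the loop cannot stall.
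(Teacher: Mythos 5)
Your proposal is correct and takes essentially the same route as the paper's own proof: satisfiability of $\I \cup \set{\BC}$ as an invariant of the while loop (lines 2--10) for part~(1), and the invariant $n = \len(\BC)+1$ combined with the falsified loop condition at exit plus the pruning done by the for loop for part~(2). The only difference is that you make the termination of the algorithm explicit (each while iteration removes a block from the finite sequence $\PU$), a point the paper leaves implicit; this is a sound and careful addition rather than a different approach.
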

\begin{proof}
By assumption,
\begin{equation}\label{eq:invariant:1}
\text{$\I \cup \set{\BC}$ is satisfiable}
\end{equation}
holds for the arguments passed to the algorithm. Moreover, the condition in line~6 guarantees that~\eqref{eq:invariant:1}
is a loop invariant of the while loop in lines~2--10, i.e., it holds after each iteration. Since $\BC$ is not changed after line 10, \eqref{eq:invariant:1} also holds for the final result, which shows the first claim of the lemma.

It is easy to see that
\begin{equation}\label{eq:invariant:2}
n = \len(\BC)+1
\end{equation}
also is a loop invariant of while loop in lines~2--10.
In particular, \eqref{eq:invariant:2} holds after line 10 and thus the for loop in lines~11--15 removes all blocks $[i,A]$ from $\PU$ with $ i < \len(\BC)+1$.
Moreover, after the while loop in lines~2--10 has terminated,  its loop condition must be false, which means that $\PU$ cannot contain a block 
$[i,A]$ with $i = \len(\BC)+1$.
This finishes the proof of the second claim.
\qed
\end{proof}

\begin{definition}
Let $\M := (\I, \BC, \PU, \val)$  be a model and\/ $[i,A]$ be a block.
The \emph{updated model} $\M^{[i,A]}$ is  defined as  $(\I, \BC', \PU', \val)$ where 
\[
(\BC',\PU') := \complete(\I, \BC,\PU \apnd [i,A] ).
\]
\end{definition}

\begin{remark}
Note that $\M^{[i,A]}$ is well-defined: by Lemma~\ref{l:welldef:1} we know that $\M^{[i,A]}$  is indeed a model.
\end{remark}

\begin{definition}
\label{d:truth:1}
Let $\M := (\I, \BC, \PU, \val)$  be a model. We define the \emph{truth} of an $\langB$-formula $F$ in $\M$, in symbols $\M \models F$, inductively by:
\begin{enumerate}
\item $\M \not \models \bot$;
\item $\M \models P$ if $P \in \val$ for $P\in \atomP$;
\item $\M \models Qi$ if $i \leq \len(\BC)$ for  $Qi\in \atomQ$;
\item $\M \models F \to G$ if\/ $\M \not \models F$ or\/ $\M \models G$;
\item $\M \models \Box A$ if\/ $\I \cup \set{\BC}\models A$;
\item $\M \models [i,A] F$ if\/ $\M^{[i,A]} \models F$.
\end{enumerate}
\end{definition}

We define validity only with respect to the class of models that do not have provisional updates. 
\begin{definition}
We call a model\/ $\M = (\I, \BC, \PU, \val)$ \emph{initial} if\/ $\PU =\seq{}$.
A formula~$F$ is called \emph{valid} if\/ $\M \models F$ for all initial models~$\M$.
\end{definition}



\section{The deductive system $\BCL$}\label{sec:3}


In order to present an axiomatic system for our blockchain logic, we need to formalize an \emph{acceptance condition} stating whether a received block can be added to the blockchain. That is we need a formula~$\Acc(i,A)$ expressing that the formula~$A$ is consistent with the current beliefs and the current length of the blockchain is $i-1$. 
Thus if  $\Acc(i,A)$ holds, then the block $[i,A]$ 
will be accepted and added to the blockchain.
The truth definition for the atomic propositions $Qi \in \atomQ$ says that $Qi$ is true if the blockchain contains at least $i$ elements. That means the formula $Q(i-1) \land \lnot Qi$ is true if the blockchain contains exactly $i-1$ elements. This leads to the following definition of $\Acc(i,A)$ for  $i \in \natNP$:
%
\[
\Acc(i,A) := \begin{cases}
\lnot Qi \land \lnot \Box \lnot A & \text{ if $i=1$}\\
Q(i-1) \land \lnot Qi \land \lnot \Box \lnot A & \text{ if $i>1$}
\end{cases}
\]
%
As desired, we find that  if $\Acc(i,A)$ is true, then the chain completion algorithm can append the formula~$A$ to the blockchain (see Lemma~\ref{l:updates:1} later).

An $\langB$-formula is called compliant if the blockchain updates occur in the correct order. Formally, we use the following definition.
\begin{definition}
An $\langB$-formula $F$ is \emph{compliant} if
no occurrence of a\/ $[i,A]$-operator in~$F$ is in the scope of some $[j,B]$-operator with $j > i$.
\end{definition}

Now we can define the system~$\BCL$ for \emph{Epistemic Blockchain Logic}.
It is formulated in the language~$\langB$ and consists of the following axioms:
\par\medskip
\begin{tabular}{ll}
$( \mathsf{PT} )$ & Every instance of a
propositional tautology \\
$( \mathsf{K} )$ & $\Box ( F \rightarrow G ) \rightarrow ( \Box F
\rightarrow \Box G )$\\
$( \mathsf{D} )$ & $\lnot \Box \bot$\\
$( \mathsf{Q} )$ & $Qi \to Qj$ if $i > j$\\
$( \mathsf{A1} )$ & $ [i,A] \bot \to \bot$\\
$( \mathsf{A2} )$ & $ [i,A]P \leftrightarrow P$ for $P\in\atomP$\\
$( \mathsf{A3.1} )$ & $ \Acc(i,A) \to ( [i,A]Qi \leftrightarrow \top)$ for $Qi \in\atomQ$\\
$( \mathsf{A3.2} )$ & $ \lnot \Acc(i,A) \to( [i,A]Qi \leftrightarrow Qi)$ for $Qi \in\atomQ$\\
$( \mathsf{A3.3} )$\quad\   & $ [i,A]Qj \leftrightarrow Qj$ for $Qj \in\atomQ$ and $i\neq j$\\
$( \mathsf{A4} )$ & $\begin{array}{l}  [i_1,A_1]\ldots[i_k,A_k](F \to G) \leftrightarrow\\
\qquad\qquad ( [i_1,A_1]\ldots[i_k,A_k]F \to [i_1,A_1]\ldots[i_k,A_k]G)\end{array}$ \\
$( \mathsf{A5.1} )$ & $ \Acc(i,A) \to ([i,A] \Box B  \leftrightarrow \Box(A \to B))$\\
$( \mathsf{A5.2} )$ & $ \lnot \Acc(i,A) \to ([i,A] \Box B  \leftrightarrow \Box B)$\\  
$( \mathsf{A6} )$ & $\begin{array}{l}  [h_1,C_1]\ldots[h_k,C_k][i,A][j,B] F  \leftrightarrow  \\ \qquad\qquad[h_1,C_1]\ldots[h_k,C_k][j,A][i,B] F \end{array}$ \quad for $i \neq j$
\end{tabular}
\par\medskip

Note that in $( \mathsf{A6} )$, we may choose $k$ to be 0, in which case the axiom has the form $[i,A][j,B] F  \leftrightarrow [j,A][i,B] F$ for $i \neq j$.


%
In order to formulate the rules of $\BCL$, we need the following notation.
Let $H(P)$ be a formula that may contain occurrences of the atomic proposition $P$. By $H(F)$, we denote the result of simultaneously replacing each occurrence of $P$ in $H(P)$ with the formula $F$.
The rules of $\BCL$ are:
\[
( \mathsf{MP})\, \begin{array}{c} F \qquad F \rightarrow G \\
\hline G \end{array} 
\qquad 
( \mathsf{NEC} )\, \begin{array}{c} A
\\ \hline \Box A \end{array} 
%
\qquad 
( \mathsf{SUB} )\, \begin{array}{c} F \leftrightarrow G \\
\hline H(F) \leftrightarrow H(G) \end{array} 
\]
where $\mathsf{(SUB)}$ can only be applied if 
$H(F) \leftrightarrow H(G)$ is a compliant formula.

%
%

\begin{remark}
Our semantics includes infinite blockchains: in a given model~$(\I, \BC, \PU, \val)$, the sequence $\BC$ may have infinite length. If we want to exclude such models, then we have to add an infinitary rule 
\[
\begin{array}{c} Qi \quad \text{for all $i\in \natNP$} \\
\hline \bot \end{array} 
\
\]
to $\BCL$. This rule states that some $Qi$ must be false, which means that $\BC$ has finite length.
\end{remark}

\section{Soundness}\label{sec:4}

Before we can establish soundness of $\BCL$, we have to show some preparatory lemmas.

\begin{lemma}\label{l:updates:1}
Let\/ $\M := (\I, \BC, \seq{}, \val)$ be an initial model. Further let $(\I, \BC', \PU', \val) := \M^{[i,A]}$ for some block $[i,A]$.
\begin{enumerate}
\item
If\/ $\M \models \Acc(i,A)$, then $\BC' = \BC \apnd A$.
In particular, this yields
$\len(\BC') = i$ and
for each $j$ with $j \neq i$,
\[
M\models Qj \quad\text{if and only if}\quad \M^{[i,A]} \models Qj.
\]
\item
If\/ $\M \not\models \Acc(i,A)$, then $\BC' = \BC$. 
\end{enumerate}
\end{lemma}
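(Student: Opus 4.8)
The plan is to unfold the definitions and then trace the chain completion algorithm. Since $\M$ is initial, $\PU = \seq{}$, so computing $\M^{[i,A]}$ amounts to running $\complete(\I, \BC, \seq{[i,A]})$ on a provisional update list containing the single block $[i,A]$. First I would translate the semantic premise into concrete conditions using the truth clauses for $Qj$ and $\Box$ from Definition~\ref{d:truth:1}. Unwinding the two cases in the definition of $\Acc$ shows that $\M \models \Acc(i,A)$ holds precisely when $\len(\BC) = i-1$ and $\I \cup \set{\BC} \cup \{A\}$ is satisfiable: for $i>1$ the conjunct $Q(i-1)\land\lnot Qi$ forces $i-1 \leq \len(\BC) < i$, hence $\len(\BC)=i-1$; for $i=1$ the conjunct $\lnot Q1$ forces $\len(\BC)=0$; and in both cases $\lnot\Box\lnot A$ is by clause~5 equivalent to the stated satisfiability.

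For the first claim I would trace Algorithm~\ref{Alg:ChainCompletion} under these conditions. Line~1 sets $n = \len(\BC)+1 = i$, so the only block $[i,A]$ of $\PU$ matches the while-loop guard. The body then selects $B = A$, removes that block (leaving $\PU=\seq{}$), and—because the satisfiability test in line~6 is exactly the condition guaranteed by $\Acc(i,A)$—appends $A$, yielding $\BC\apnd A$ and resetting $n$ to $i+1$. With $\PU$ now empty the while loop exits after this single iteration and the for loop in lines~11--15 does nothing, so $\complete$ returns $(\BC\apnd A, \seq{})$; hence $\BC' = \BC \apnd A$. The ``in particular'' statements are then arithmetic: $\len(\BC') = \len(\BC)+1 = i$, and by clause~3 of Definition~\ref{d:truth:1} the truth of $Qj$ depends only on comparing $j$ with the blockchain length, which changes from $i-1$ to $i$; since both conditions reduce to $j<i$ whenever $j\neq i$, the value is preserved, giving $\M \models Qj \Leftrightarrow \M^{[i,A]} \models Qj$ for every $j\neq i$.

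For the second claim I would invoke the negation of the characterization above: $\M \not\models \Acc(i,A)$ means either $\len(\BC)\neq i-1$ or $\I\cup\set{\BC}\cup\{A\}$ is unsatisfiable. In the first case $n=\len(\BC)+1\neq i$, so the single block in $\PU$ fails to match the while-loop guard and the loop never executes; in the second case the guard does fire once, but the satisfiability test in line~6 fails, so the conditional body in lines~7--8 is skipped. In either case $\BC$ is never modified, so $\complete$ returns $\BC' = \BC$.

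The argument is essentially bookkeeping; the only real care needed is (i) verifying that a single-element provisional list forces the while loop to run at most once, and (ii) correctly handling the disjunctive negation of $\Acc(i,A)$ in the second claim, so that both sub-cases demonstrably leave $\BC$ untouched. I expect the latter to be the main—if minor—obstacle, since one must confirm that both the loop guard and the line~6 test behave as intended in each sub-case.
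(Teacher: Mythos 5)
Your proof is correct and takes essentially the same approach as the paper's: unfold $\M \models \Acc(i,A)$ into the two concrete conditions $\len(\BC)+1 = i$ and satisfiability of $\I \cup \set{\BC} \cup \{A\}$, then read off the behaviour of the chain completion algorithm on $\seq{} \apnd [i,A]$ in each case. The paper simply asserts the algorithm's output where you trace it line by line, so yours is the same argument carried out in more detail.
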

\begin{proof}
Assume $\M \models \Acc(i,A)$. That means $\len(\BC)+1 = i$ and
$\I \cup \set{\BC} \cup \{A\}$ is satisfiable. Hence we find 
\[
\complete(\I, \BC,\seq{} \apnd [i,A]) = (\BC \apnd A, \seq{}). 
\]
Therefore $\BC' = \BC \apnd A$.
This immediately yields 
\[
\len(\BC') =i = \len(\BC)+1
\]
and for each $j$ with $j \neq i$,
\[
M\models Qj \quad\text{if and only if}\quad \M^{[i,A]} \models Qj.
\]

Assume $\M \not\models \Acc(i,A)$. This implies 
\[
\text{$\len(\BC)+1\neq i$ or  $\I \cup \set{\BC} \cup \{A\}$ is not satisfiable.}
\]
Hence for $(\BC', \PU') := \complete(\I, \BC,\seq{} \apnd [i,A])$, we find $\BC' = \BC$.
\qed
\end{proof}

\begin{lemma}
Each axiom of\/ $\BCL$ is valid.
\end{lemma}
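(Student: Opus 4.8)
The plan is to verify validity axiom by axiom, where "valid" means $\M \models F$ for every initial model $\M = (\I, \BC, \seq{}, \val)$. For each axiom I would unfold the truth definition (Definition~\ref{d:truth:1}) and reduce the update operators to conditions on the completed chain, relying heavily on Lemma~\ref{l:updates:1} to compute $\M^{[i,A]}$. The axioms split naturally into three groups: the purely modal/propositional ones $(\mathsf{PT})$, $(\mathsf{K})$, $(\mathsf{D})$, $(\mathsf{Q})$; the reduction axioms $(\mathsf{A1})$--$(\mathsf{A5.2})$ governing a single update; and the commutation axioms $(\mathsf{A4})$, $(\mathsf{A6})$ governing interacting updates.

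First I would dispatch the easy cases. $(\mathsf{PT})$ holds because the truth definition interprets the Boolean connectives classically at each model. $(\mathsf{K})$ and $(\mathsf{D})$ reduce to facts about the consequence relation $\models$: since $\M \models \Box A$ iff $\I \cup \set{\BC} \models A$, axiom $(\mathsf{K})$ follows from monotonicity/closure of entailment, and $(\mathsf{D})$ follows from the satisfiability condition~\eqref{eq:model:1}, which guarantees $\I \cup \set{\BC} \not\models \bot$. Axiom $(\mathsf{Q})$ is immediate from clause~3 of the truth definition: if $i > j$ and $i \leq \len(\BC)$, then $j \leq \len(\BC)$. Axiom $(\mathsf{A1})$ follows since $\M^{[i,A]}$ is again a model (by the Remark after the definition of updated model), so $\M^{[i,A]} \not\models \bot$, hence $\M \not\models [i,A]\bot$. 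Axiom $(\mathsf{A2})$ holds because the valuation component $\val$ is unchanged by updating.

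The heart of the work is the single-update reduction axioms $(\mathsf{A3.1})$--$(\mathsf{A5.2})$, and here I would case-split on whether $\M \models \Acc(i,A)$, feeding each case into the corresponding part of Lemma~\ref{l:updates:1}. For $(\mathsf{A3.1})$: if $\Acc(i,A)$ holds, then $\len(\BC') = i$ by Lemma~\ref{l:updates:1}(1), so $\M^{[i,A]} \models Qi$, giving $[i,A]Qi \leftrightarrow \top$. For $(\mathsf{A3.2})$: if $\Acc(i,A)$ fails, then $\BC' = \BC$ by Lemma~\ref{l:updates:1}(2), so the truth value of $Qi$ is preserved. For $(\mathsf{A3.3})$ I would need the clause of Lemma~\ref{l:updates:1}(1) stating that for $j \neq i$ the truth of $Qj$ is preserved in the accepting case, together with $\BC' = \BC$ in the rejecting case. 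Axioms $(\mathsf{A5.1})$, $(\mathsf{A5.2})$ are the interesting modal reductions: in the accepting case $\BC' = \BC \apnd A$, so $\M^{[i,A]} \models \Box B$ iff $\I \cup \set{\BC} \cup \{A\} \models B$, which is exactly $\I \cup \set{\BC} \models A \to B$, i.e.\ $\M \models \Box(A \to B)$; in the rejecting case $\set{\BC'} = \set{\BC}$ gives $\Box B \leftrightarrow \Box B$.

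I expect the main obstacle to be the commutation axiom $(\mathsf{A6})$, which asserts that for $i \neq j$ the order of two updates does not matter (under an arbitrary prefix of further updates). The core claim is $\M^{[i,A][j,B]} = \M^{[j,A][i,B]}$, and proving this requires reasoning about the internal behaviour of the chain completion algorithm rather than just its input/output characterization in Lemma~\ref{l:updates:1}, since after the first update the second model need no longer be initial. The natural strategy is to argue that when two blocks with distinct indices $i \neq j$ are both pending in $\PU$, the while loop of Algorithm~\ref{Alg:ChainCompletion} processes them in index order regardless of their order of insertion, so the resulting $\BC'$ and $\PU'$ coincide; the prefix $[h_1,C_1]\ldots[h_k,C_k]$ is handled by first applying it to both sides and then invoking the $k=0$ case on the common resulting model. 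For $(\mathsf{A4})$, the distribution of an update block over implication, I would show by induction on the length of the prefix that $\M^{[i_1,A_1]\ldots[i_k,A_k]} \models F \to G$ iff ($\M^{\ldots} \not\models F$ or $\M^{\ldots} \models G$), which is just clause~4 of the truth definition applied at the common updated model. Once each axiom is checked, the lemma follows.
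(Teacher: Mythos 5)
Your treatment of the cases the paper actually proves --- $(\mathsf{D})$, $(\mathsf{Q})$, and the reduction axioms $(\mathsf{A3.1})$--$(\mathsf{A3.3})$, $(\mathsf{A5.1})$, $(\mathsf{A5.2})$ --- is exactly the paper's argument: fix an initial model, case-split on $\Acc(i,A)$, feed each case into Lemma~\ref{l:updates:1}, and use the deduction theorem for classical logic in the accepting case of $(\mathsf{A5.1})$. The paper explicitly proves only those cases (``We only show some cases''), so your sketches for $(\mathsf{PT})$, $(\mathsf{K})$, $(\mathsf{A1})$, $(\mathsf{A2})$, and $(\mathsf{A4})$ go beyond it; they are all fine, and for $(\mathsf{A4})$ no induction is needed --- unfolding the truth definition at the common iterated update of the model already gives the equivalence.

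The genuine problem is in your $(\mathsf{A6})$ case. First, your core claim is an \emph{equality} of models, and that is too strong: if both $i$ and $j$ exceed $\len(\BC)+1$, both blocks simply land in $\PU$, and the two orders of updating produce $\PU$-components $\seq{[i,A],[j,B]}$ and $\seq{[j,B],[i,A]}$, which are distinct sequences; so your assertion that ``the resulting $\BC'$ and $\PU'$ coincide'' fails as stated. What actually holds is that the two resulting models satisfy the same formulas, and establishing that requires an auxiliary lemma your sketch does not identify: truth is invariant under permuting blocks with distinct indices inside $\PU$ (an induction over formulas, using that the completion algorithm selects blocks by index, not by position). Second, be careful which axiom your argument proves. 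As printed, $(\mathsf{A6})$ swaps the \emph{indices} but leaves the formulas in place ($[j,A][i,B]$, which is also what you wrote in your core claim), and under that literal reading even truth-equivalence fails: in the initial model $(\emptyset,\seq{},\seq{},\emptyset)$ the formula $[1,P][2,\lnot P]\Box P$ is true, while $[2,P][1,\lnot P]\Box P$ is false, because in the second update sequence $\lnot P$ enters the chain first and $P$ is then rejected by the consistency check. Your prose argument (``the order of insertion does not matter because the while loop processes blocks in index order'') establishes the block-swap version $[i,A][j,B]F \leftrightarrow [j,B][i,A]F$, which is the version that is valid; your written claim and your argument should be brought into agreement on this point before the case can be considered closed.
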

\begin{proof}
We only show some cases. Let 
$\M := (\I, \BC, \seq{}, \val)$ be an initial model.
\begin{enumerate}
\item $\lnot \Box \bot$. By the definition of a model, we have that $\I \cup \set{\BC}$ is satisfiable. Hence $\I \cup \set{\BC} \not\models \bot$, which means $\M \not\models \Box \bot$.
%
%
\item $Qi \to  Qj$ for $i > j$. Assume $\M\models Qi$. That means $i \leq \len(\BC)$. Hence, for $j < i$, we have $j \leq \len(\BC)$, which gives $\M\models Qj$.
\item $ \Acc(i,A) \to ( [i,A]Qi \leftrightarrow \top)$. Assume $\M \models \Acc(i,A)$.  Using Lemma~\ref{l:updates:1}, we get $\M^{[i,A]}\models Qi$. Thus
$\M \models [i,A]Qi \leftrightarrow \top$ as desired.
\item $ \lnot \Acc(i,A) \to( [i,A]Qi \leftrightarrow Qi)$. Assume $\M \not\models \Acc(i,A)$.  We use again Lemma~\ref{l:updates:1} to obtain
$
\M \models [i,A]Qi \leftrightarrow Qi
$.
\item $ [i,A]Qj \leftrightarrow Qj$ for $Qj \in\atomQ$ and $i\neq j$. If $\M \not \not\models \Acc(i,A)$, we obtain \mbox{$\M \models [i,A]Qj \leftrightarrow Qj$} as in the previous case.  If $\M \models \Acc(i,A)$, then again by Lemma~\ref{l:updates:1},
$\M \models  [i,A]Qj \leftrightarrow Qj$ for  $i\neq j$.
\item $ \Acc(i,A) \to ([i,A] \Box B  \leftrightarrow \Box(A \to B))$.
Assume $\M \models \Acc(i,A)$ and let 
\[
(\I, \BC', \PU', \val) := \M^{[i,A]}.
\]
By Lemma~\ref{l:updates:1} we get $\BC' = \BC \apnd A$. Thus $\set{\BC'} = \set{\BC} \cup \{A\}$. By the deduction theorem for classical logic we find
\[
\I \cup \set{\BC} \cup \{A\} \modelsCL B
\quad\text{if and only if}\quad
\I \cup \set{\BC} \modelsCL A \to B,
\]
which yields $\M \models [i,A] \Box B  \leftrightarrow \Box(A \to B)$.
\item $ \lnot \Acc(i,A) \to ([i,A] \Box B  \leftrightarrow \Box B)$. Assume $\M \not\models \Acc(i,A)$.  From Lemma~\ref{l:updates:1}, we  immediately get $\M \models [i,A]\Box B   \leftrightarrow \Box B  $.
\qed
\end{enumerate}
\end{proof}

\begin{lemma}
\label{l:model:1}
Let\/ $\M = (\I, \BC, \PU, \val)$ be an arbitrary model and let\/ $[i,A]$ be a block such that $i > \len(\BC)+1$.
Then we have\/ $\M^{[i,A]} = (\I, \BC, \PU \apnd [i,A] , \val)$.
\end{lemma}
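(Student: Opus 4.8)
The plan is to unfold the definition of the updated model and reduce the claim to showing that the chain completion algorithm leaves its input unchanged on the triple $(\I, \BC, \PU \apnd [i,A])$. By definition, $\M^{[i,A]} = (\I, \BC', \PU', \val)$ where $(\BC', \PU') := \complete(\I, \BC, \PU \apnd [i,A])$, so it suffices to prove $(\BC', \PU') = (\BC, \PU \apnd [i,A])$. For this I would invoke the observation recorded just before Lemma~\ref{l:welldef:1}: if the blockchain and the provisional-update sequence passed to $\complete$ already satisfy condition~\eqref{eq:model:2}, then the algorithm returns them unchanged.

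First I would check that $\BC$ together with $\PU \apnd [i,A]$ satisfies condition~\eqref{eq:model:2}, that is, every block in $\PU \apnd [i,A]$ has index strictly greater than $\len(\BC)+1$. Since $\M$ is a model, its component $\PU$ satisfies~\eqref{eq:model:2} with respect to $\BC$, so every block already in $\PU$ has index exceeding $\len(\BC)+1$. The one new block $[i,A]$ has index $i > \len(\BC)+1$ by hypothesis. Hence condition~\eqref{eq:model:2} holds for $\PU \apnd [i,A]$ relative to $\BC$, and the cited observation immediately yields $(\BC', \PU') = (\BC, \PU \apnd [i,A])$, which is the claim.

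To make the argument self-contained I could instead trace Algorithm~\ref{Alg:ChainCompletion} directly. Line~1 sets $n = \len(\BC)+1$. Because every block $[j,B]$ in $\PU \apnd [i,A]$ has index $j \geq \len(\BC)+2 > n$, no block of index $n$ occurs, so the while-loop guard in line~2 is false and the loop body never executes; thus $\BC$ and $n$ remain unchanged. For the same reason, in the for-loop of lines~11--15 each block satisfies $\ind((\PU)_k) > n$, so the test in line~12 never succeeds and no block is removed. The algorithm therefore returns exactly $(\BC, \PU \apnd [i,A])$.

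Since the result is a direct consequence of the invariant already isolated in the discussion preceding Lemma~\ref{l:welldef:1}, I do not anticipate any real obstacle. The only point requiring care is the strict inequality: $i > \len(\BC)+1$ places the new block's index strictly above the value $n$ tested in both loops, so that neither the while-loop nor the cleanup for-loop is ever entered.
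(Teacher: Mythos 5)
Your proof is correct and takes essentially the same approach as the paper: observe that, since $\M$ is a model and $i > \len(\BC)+1$, the pair $(\BC, \PU \apnd [i,A])$ satisfies condition~\eqref{eq:model:2}, so the chain completion algorithm returns its input unchanged. Your explicit trace of Algorithm~\ref{Alg:ChainCompletion} merely spells out the observation stated before Lemma~\ref{l:welldef:1}, which the paper's own (terser) proof relies on implicitly.
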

\begin{proof}
Let 
\[
(\BC',\PU') := \complete(\I, \BC,\PU \apnd [i,A]).
\]
Since $\M$ is a model, condition \eqref{eq:model:2} is satisfied. Therefore,
we find that
$\BC' =\BC$ and $\PU' = \PU \apnd [i,A]$, 
which is $\M^{[i,A]} = (\I, \BC, \PU \apnd [i,A] , \val)$.
\qed
\end{proof}

\begin{lemma}
\label{l:model:2}
Let\/ $\M = (\I, \BC, \seq{}, \val)$ be an initial model and let\/ $[i,A]$ be a block such that $i \leq  \len(\BC)+1$.
Then $\M^{[i,A]}$ is an initial model, too.
\end{lemma}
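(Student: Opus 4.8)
The plan is to unfold the definition of $\M^{[i,A]}$ and then trace the execution of the chain completion algorithm. Since $\M$ is initial we have $\PU = \seq{}$, so
\[
\M^{[i,A]} = (\I, \BC', \PU', \val) \quad\text{where}\quad (\BC',\PU') := \complete(\I, \BC, \seq{[i,A]}).
\]
The key observation is that the only block ever occurring in the working sequence during this run is $[i,A]$: Algorithm~\ref{Alg:ChainCompletion} only removes blocks from $\PU$ (on lines~5 and~13) and never inserts new ones. Hence $\PU'$ is either $\seq{[i,A]}$ or $\seq{}$, and it suffices to show that $[i,A]$ is removed at some point, so that $\PU' = \seq{}$ and $\M^{[i,A]}$ is again initial.

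First I would split according to the hypothesis $i \leq \len(\BC)+1$. If $i = \len(\BC)+1$, then after line~1 we have $n = \len(\BC)+1 = i$, so the while-loop guard is met by $[i,A]$. In its first iteration the loop locates this block at position~$1$ (line~3) and executes $\remove(1,\PU)$ on line~5, emptying $\PU$ \emph{before} the satisfiability test on line~6 is even reached. Consequently, whether or not $A$ is appended to $\BC$, the sequence $\PU$ is empty after this iteration; the loop guard then fails, the for-loop on lines~11--15 ranges over an empty index set, and we obtain $\PU' = \seq{}$.

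In the remaining case $i < \len(\BC)+1$ we have $n = \len(\BC)+1 > i$, so no block of index $n$ lies in $\PU$ and the while loop is skipped entirely, leaving $\BC$ and $n$ unchanged. The for-loop on lines~11--15 then inspects the single remaining block $[i,A]$, finds $\ind([i,A]) = i < n$, and removes it on line~13, so again $\PU' = \seq{}$. In both subcases $\PU' = \seq{}$, hence $\M^{[i,A]}$ is an initial model.

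The argument is really a careful reading of the algorithm rather than an appeal to a deep lemma, and I expect the only genuine subtlety to be the boundary case $i = \len(\BC)+1$: one must note that the block is discarded on line~5 \emph{prior} to the consistency check, so that it disappears from $\PU$ regardless of whether it ends up on the blockchain. Everything else reduces to the fact that the chain completion procedure is purely subtractive on $\PU$ once the single new block has been handled.
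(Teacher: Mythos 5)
Your proof is correct and follows essentially the same route as the paper's: a case split on $i = \len(\BC)+1$ versus $i < \len(\BC)+1$, observing that the block $[i,A]$ is removed on line~5 (before the satisfiability test) in the first case and on line~13 in the second, so that $\PU' = \seq{}$ either way. Your added remark that the algorithm never inserts blocks into $\PU$ is a harmless elaboration of the same argument.
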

\begin{proof}
Let $\PU = \seq{[i,A]}$ and
\[
(\BC',\PU') := \complete(\I, \BC,\PU).
\]
If $i=\len(\BC)+1$, then $[i,A]$ is removed from $\PU$ in line~5 of Algorithm~\ref{Alg:ChainCompletion}. If $i<\len(\BC)+1$, then $[i,A]$ is removed from $\PU$ in line~13. In both cases we find $\PU' = \seq{}$, which means that $\M^{[i,A]}$ is initial.
\qed
\end{proof}

\begin{lemma}
\label{l:initial:1}
Let $(\I, \BC, \PU, \val)$ be a model and $F$ be an $\langB$-formula such that for each~$[i,A]$  occurring in $F$ we have $i > \len(\BC)+1$.
Then
\[
(\I, \BC, \PU, \val) \models F
\quad\text{if and only if}\quad
(\I, \BC, \seq{}, \val) \models F.
\]
\end{lemma}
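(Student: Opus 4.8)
The plan is to proceed by induction on the structure of the $\langB$-formula $F$, proving the biconditional simultaneously for all models $(\I,\BC,\PU,\val)$ and all empty-$\PU$ counterparts $(\I,\BC,\seq{},\val)$ that agree on $\I$, $\BC$, and $\val$. The crucial observation is that the two models differ only in their provisional-update component ($\PU$ versus $\seq{}$), and the truth conditions in Definition~\ref{d:truth:1} for the base cases $P$, $Qi$, and $\Box A$ depend only on $\val$ and $\BC$ (respectively on $\I\cup\set{\BC}$), none of which involve $\PU$. So for atomic propositions, the $Q$-atoms, and $\Box$-formulas the two sides are equal outright; the case $\bot$ is trivial and the implication case $F\to G$ follows directly from the induction hypothesis applied to the two subformulas.

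The only genuine work is the update case $F = \up{i,A}G$. Here the hypothesis of the lemma is essential: we assume every block $\up{i,A}$ occurring in $F$ has index $i > \len(\BC)+1$, and in particular this holds for the outermost update. By Lemma~\ref{l:model:1} (applied to the model $(\I,\BC,\PU,\val)$, which satisfies the index condition on the leading block), we have
\[
(\I,\BC,\PU,\val)^{\up{i,A}} = (\I,\BC,\PU\apnd\up{i,A},\val),
\]
so the blockchain stays $\BC$ and the new provisional-update list is merely $\PU$ extended by $\up{i,A}$. Applying Lemma~\ref{l:model:1} again, this time to the initial model $(\I,\BC,\seq{},\val)$, yields
\[
(\I,\BC,\seq{},\val)^{\up{i,A}} = (\I,\BC,\seq{}\apnd\up{i,A},\val) = (\I,\BC,\seq{\up{i,A}},\val).
\]
Both updated models thus retain blockchain $\BC$, so the index condition required for the induction hypothesis is preserved: every block occurring in the subformula $G$ still has index strictly greater than $\len(\BC)+1$. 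By the induction hypothesis applied to $G$ with the base model $(\I,\BC,\PU\apnd\up{i,A},\val)$, truth of $G$ there coincides with truth of $G$ in the corresponding empty-$\PU$ model $(\I,\BC,\seq{},\val)$, and then a second application relates the latter to $(\I,\BC,\seq{\up{i,A}},\val)$. Chaining these equivalences gives $\M\models\up{i,A}G$ iff $(\I,\BC,\seq{},\val)\models\up{i,A}G$, as required.

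The step I expect to need the most care is the bookkeeping in the update case: I must be careful that the induction hypothesis is stated for a family of models (not a single fixed one), so that it can be reapplied to the two distinct updated models arising above, and I must confirm that the index condition genuinely survives the update. This is exactly where Lemma~\ref{l:model:1} does the heavy lifting, since it guarantees that a too-high-index block leaves $\BC$ untouched and merely lands in $\PU$; because $\len(\BC)$ never changes, the inductive hypothesis on $G$ remains applicable. All remaining cases reduce to the fact that $\PU$ is invisible to the non-update truth clauses.
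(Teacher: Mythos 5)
Your proposal is correct and matches the paper's own proof: the same structural induction, with the non-update cases dismissed because the truth clauses never consult $\PU$, and the update case handled by applying Lemma~\ref{l:model:1} to both models and then chaining two applications of the induction hypothesis through the intermediate model $(\I,\BC,\seq{},\val)$. Your explicit remarks that the induction hypothesis must be quantified over all models sharing $\I$, $\BC$, $\val$, and that the index condition survives because the blockchain is unchanged, are points the paper leaves implicit but uses in exactly the same way.
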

\begin{proof}
By induction on the structure of $F$ and a case distinction on the outermost connective.
The only interesting case is $F = [i,A]G$.
Since $i > \len(\BC)+1$ by assumption, we find by Lemma~\ref{l:model:1}  that
$(\I, \BC, \PU, \val)^{[i,A]} = (\I, \BC, \PU \apnd [i,A], \val)$. 
Thus we get 
\begin{equation}\label{eq:normal:1}
(\I, \BC, \PU, \val) \models  [i,A]G \quad\text{if and only if}\quad (\I, \BC, \PU \apnd [i,A], \val) \models G.
\end{equation}
Using I.H.~twice yields
\begin{equation}\label{eq:normal:2}
(\I, \BC, \PU \apnd [i,A], \val) \models G \quad\text{if and only if}\quad (\I, \BC, \seq{[i,A]}, \val) \models G.
\end{equation}
Again since $i > \len(\BC)+1$ we find that
\[
(\I, \BC, \seq{[i,A]}, \val) = (\I, \BC, \seq{}, \val)^{[i,A]}
\]
 and thus
\begin{equation}\label{eq:normal:3}
(\I, \BC, \seq{[i,A]}, \val) \models G \quad\text{if and only if}\quad (\I, \BC, \seq{}, \val) \models  [i,A]G.
\end{equation}
Taking \eqref{eq:normal:1}, \eqref{eq:normal:2}, and \eqref{eq:normal:3} together yields the desired result.
\qed
\end{proof}

Now we can show that the rule  $\mathsf{(SUB)}$ preserves validity.

\begin{lemma}\label{l:subSound:1}
Let $H(P),F,G$ be $\langB$-formulas such that 
$H(F) \leftrightarrow H(G)$ is compliant.
We have that
\[
\text{if $F\leftrightarrow G$ is valid, then $H(F)\leftrightarrow H(G)$ is valid, too.}
\]
\end{lemma}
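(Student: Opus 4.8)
The plan is to prove, by induction on the structure of $H(P)$, that $\M \models H(F)$ iff $\M \models H(G)$ for every initial model $\M$; since validity is evaluated over initial models, this is exactly the assertion that $H(F) \leftrightarrow H(G)$ is valid. The induction hypothesis quantifies over all initial models and over all contexts smaller than $H$ whose two instances again form a compliant biconditional, and I would repeatedly use that any subcontext of a compliant context still yields a compliant biconditional, since the two arguments of a Boolean connective impose no scope relation on one another.

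The easy cases come first. In the base case $H(P)=P$ we have $H(F)\leftrightarrow H(G) = F\leftrightarrow G$, valid by assumption. If $H$ does not contain $P$ (e.g.\ $H=\bot$, $H=Qi$, or an atom different from $P$), then $H(F)=H(G)$ and the claim is trivial. For $H=H_1\to H_2$ it follows from the induction hypothesis applied to $H_1$ and $H_2$ together with clause~4 of Definition~\ref{d:truth:1}. For $H=\Box A$ with $A\in\langCl$: if $P$ does not occur in $A$ this is trivial; otherwise well-formedness of $H(F)$ forces $F,G\in\langCl$, so validity of $F\leftrightarrow G$ makes $F$ and $G$ classically equivalent, hence replacing $P$ by $F$ and by $G$ in $A$ yields classically equivalent formulas that are entailed by $\I\cup\set{\BC}$ under the same conditions, giving the claim via clause~5.

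The main case is $H=[i,A]H'$. Fix an initial model $\M=(\I,\BC,\seq{},\val)$; by clause~6 it suffices to show $\M^{[i,A]}\models H'(F)$ iff $\M^{[i,A]}\models H'(G)$, and I would split on the index. If $i\le\len(\BC)+1$, then by Lemma~\ref{l:model:2} the model $\M^{[i,A]}$ is again initial, so the induction hypothesis applied to $H'$ (which still gives a compliant biconditional) yields the equivalence directly. If $i>\len(\BC)+1$, then by Lemma~\ref{l:model:1} we have $\M^{[i,A]}=(\I,\BC,\seq{[i,A]},\val)$, which is not initial. Here compliance is decisive: every update operator occurring in $H'(F)$ — both those coming from $H'$ and those inside the substituted copies of $F$ — lies in $H(F)$ within the scope of $[i,A]$, so compliance forbids its index from being smaller than $i$, and hence every such index exceeds $\len(\BC)+1$. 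Lemma~\ref{l:initial:1} then lets me discard the provisional update, reducing $\M^{[i,A]}\models H'(F)$ to $(\I,\BC,\seq{},\val)\models H'(F)$ and likewise for $G$, after which the induction hypothesis on $H'$ at the initial model $(\I,\BC,\seq{},\val)$ closes the case.

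I expect the update case, specifically the subcase $i>\len(\BC)+1$, to be the main obstacle: there the subcontext $H'(F)$ must be evaluated in a genuinely non-initial model, where validity of $F\leftrightarrow G$ by itself gives no information. The whole purpose of the compliance side condition on $\mathsf{(SUB)}$ is to secure exactly the index bound needed to invoke Lemma~\ref{l:initial:1} and thereby pull the evaluation back to an initial model. A secondary point requiring care is verifying that compliance is inherited by every subcontext to which the induction hypothesis is applied, and that substitution into a $\Box$-context stays inside $\langCl$.
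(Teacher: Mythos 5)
Your proposal is correct and follows essentially the same route as the paper's own proof: induction on $H(P)$, with the update case split on $i \le \len(\BC)+1$ (using Lemma~\ref{l:model:2} so that $\M^{[i,A]}$ stays initial) versus $i > \len(\BC)+1$ (using Lemma~\ref{l:model:1}, compliance to bound all indices in $H'(F)$ and $H'(G)$, and Lemma~\ref{l:initial:1} to pull evaluation back to an initial model). Your only deviations are cosmetic — arguing the $\Box$ case via classical substitutivity instead of the induction hypothesis, and invoking the truth clause for $[i,A]$ directly where the paper cites axiom $\mathsf{(A4)}$ — and you are in fact slightly more careful than the paper in checking that compliance is inherited by subcontexts.
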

\begin{proof}
We show the validity of  $H(F)\leftrightarrow H(G)$ by induction on the structure of  $H(P)$.
We distinguish the following cases.
\begin{enumerate}
\item $H$ does not contain $P$. We find $H = H(F) = H(G)$. Hence  $H(F)\leftrightarrow H(G)$ is trivially valid.
\item $H=P$.  We have $H(F)=F$ and $H(G)=G$. Thus $H(F)\leftrightarrow H(G)$ is valid by assumption. 
\item $H = H' \to H''$. Follows immediately by  I.H.
\item $H = \Box H'$ By I.H., we find that $H'(F) \leftrightarrow H'(G)$ is valid. 
Since $\langB$ does not include nested $\Box$-operators, $H'(P)$ is an $\langCl$-formula. Since $H(F)\leftrightarrow H(G)$ is a formula, $F$ and $G$  must be $\langCl$-formulas, too. Hence, $H'(F) \leftrightarrow H'(G)$ is an $\langCl$-formula and we obtain
$\modelsCL H'(F) \leftrightarrow H'(G)$.
Hence we have $\M \models \Box H'(F)$ if and only if $\M \models \Box H'(G)$ for any model~$\M$, which yields that  
$H(F)\leftrightarrow H(G)$ is valid.
\item $H = [i,A]H'$. Let $\M :=(\I, \BC, \seq{}, \val)$ be an initial model. We distinguish the following cases:
	\begin{enumerate}
	\item $i \leq \len(\BC)+1$. By Lemma~\ref{l:model:2},  we find that $\M^{[i,A]}$ is an initial model.  Thus by the I.H.~we infer  $\M^{[i,A]}\models H'(F) \leftrightarrow  H'(G)$, from which we infer 
\[
\M \models  [i,A]H'(F) \leftrightarrow [i,A]H'(G)
\]
by the validity of $\mathsf{(A4)}$. 
	\item $i >  \len(\BC)+1$.  By Lemma~\ref{l:model:1}, we find that  
\[
	\M^{[i,A]} =(\I, \BC, \seq{[i,A]}, \val).
\]
  Since $H(F)$ is compliant, we obtain  that for each~$[j,B]$ occurring in $H(F)$, we have $j > \len(\BC)+1$. 
	Hence we obtain by Lemma~\ref{l:initial:1} that 
	\begin{equation}\label{eq:sub:1}
	\M^{[i,A]} \models H'(F) \quad\text{if and only if}\quad (\I, \BC, \seq{}, \val) \models H'(F).
	\end{equation}
	By I.H.~we get 
	\begin{equation}\label{eq:sub:2}
	 (\I, \BC, \seq{}, \val) \models H'(F) \quad\text{if and only if}\quad (\I, \BC, \seq{}, \val) \models H'(G).
	\end{equation}
	Since $H(G)$ is compliant, we find that $H'(G)$ satisfies the condition of Lemma~\ref{l:initial:1}. Thus we can use that lemma again to obtain
	\begin{equation}\label{eq:sub:3}
	(\I, \BC, \seq{}, \val) \models H'(G) \quad\text{if and only if}\quad \M^{[i,A]} \models H'(G) .
	\end{equation}
	Taking \eqref{eq:sub:1}, \eqref{eq:sub:2}, and \eqref{eq:sub:3} together yields 
\[
\M \models  [i,A]H'(F) \leftrightarrow [i,A]H'(G). 
\]
	\qed
	\end{enumerate}
\end{enumerate}
\end{proof}

We have established that the axioms of $\BCL$ are valid and that $\mathsf{(SUB)}$ preserves validity. 
It is easy to see that the rules $\mathsf{(MP)}$ and $\mathsf{(NEC)}$ also preseve validity. Soundness of $\BCL$ follows immediately.

\begin{corollary}
For each formula $F$ we have
\[
\vdash F \quad \text{implies}\quad \text{$F$ is valid}.
\]
\end{corollary}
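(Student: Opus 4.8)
The plan is to prove soundness by induction on the length of a $\BCL$-derivation, showing that every derivable formula is valid. The two substantial ingredients are already available: the preceding lemma establishes that every axiom is valid, which settles the base case, and Lemma~\ref{l:subSound:1} shows that $\mathsf{(SUB)}$ produces a valid conclusion $H(F)\leftrightarrow H(G)$ from a valid premise $F\leftrightarrow G$, provided $H(F)\leftrightarrow H(G)$ is compliant. Since compliance of the conclusion is exactly the side-condition under which $\mathsf{(SUB)}$ may be applied, this case of the inductive step goes through verbatim. Hence only the rules $\mathsf{(MP)}$ and $\mathsf{(NEC)}$ remain to be checked, and for each it suffices to argue that valid premises force a valid conclusion.

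For $\mathsf{(MP)}$ the argument is routine. Assuming $G$ and $G\to F$ are valid, I would fix an arbitrary initial model $\M$; then $\M\models G$ and $\M\models G\to F$, so by the truth clause for implication $\M\not\models G$ or $\M\models F$, and since $\M\models G$ we conclude $\M\models F$. As $\M$ was arbitrary, $F$ is valid.

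The one case worth spelling out is $\mathsf{(NEC)}$, since it is the only point at which the argument must cross from $\BCL$-validity back to classical entailment. Here $A$ is an $\langCl$-formula and we assume $A$ is valid, i.e.\ $\M\models A$ for every initial model $\M$. First I would note, by a trivial subinduction on $A$, that for a classical formula the truth clauses for $\bot$, atoms of $\atomP$, and $\to$ coincide with the classical ones, so $\M\models A$ iff $\val\models A$, depending only on the valuation. Since $(\emptyset,\seq{},\seq{},\val)$ is a legitimate initial model for every valuation $\val$ (condition~\eqref{eq:model:1} holds because $\emptyset$ is satisfiable), validity of $A$ forces $\val\models A$ for all $\val$, i.e.\ $A$ is a classical tautology. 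By monotonicity of classical entailment this yields $\I\cup\set{\BC}\models A$ for any $\I$ and $\BC$, so by clause~(5) of Definition~\ref{d:truth:1} we obtain $\M\models\Box A$ for every initial model $\M$; thus $\Box A$ is valid.

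With all axioms valid and all three rules validity-preserving, the induction closes, and $\vdash F$ implies that $F$ is valid. I expect the only step needing genuine care to be the $\mathsf{(NEC)}$ case just described, namely the reduction of $\BCL$-validity of a classical formula to its being a classical tautology (exploiting the freedom to pick an arbitrary $\val$) and the subsequent transfer to $\Box A$ via the monotonicity of $\models$; everything else is bookkeeping over the shape of the derivation.
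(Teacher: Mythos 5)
Your proof is correct and takes essentially the same approach as the paper: the paper's proof is exactly this induction over derivations, citing the axiom-validity lemma and Lemma~\ref{l:subSound:1} for $\mathsf{(SUB)}$, and dismissing $\mathsf{(MP)}$ and $\mathsf{(NEC)}$ as "easy to see." Your detailed treatment of $\mathsf{(NEC)}$ (reducing $\BCL$-validity of a classical $A$ to tautologyhood via models $(\emptyset,\seq{},\seq{},\val)$ and then transferring to $\Box A$) is a sound filling-in of that gap.
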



\begin{remark}
The reduction axiom $\mathsf{(A3.3)}$ does not hold in non-initial models.
Indeed, let $\M:=(\emptyset, \seq{}, \seq{[2,\top]}, \emptyset)$.
We find that $\M^{[1,P]}=(\emptyset, \seq{P,\top}, \seq{}, \emptyset)$.
Hence $\M^{[1,P]} \models Q2$, which is $\M \models [1,P]Q2$.
But we also have $\M \not\models Q2$.
\end{remark}

\begin{remark}
The above remark also implies that a
 block necessitation rule would not be sound, that is the validity of $F$ does not entail the validity of $[i,A]F$.
Indeed, the axiom $[1,P]Q2 \leftrightarrow Q2$ is valid; but the formula $[2,\top]([1,P]Q2 \leftrightarrow Q2)$ is not valid as shown in the previous remark.
\end{remark}

\begin{remark}
The rule  $\mathsf{(SUB)}$ would not preserve validity if we drop the condition that the conclusion must be compliant.
Indeed, let us again consider the valid formula $[1,P]Q2 \leftrightarrow Q2$.
Without the compliance condition, the rule $\mathsf{(SUB)}$ would derive
$[2,P'][1,P]Q2 \leftrightarrow [2,P']Q2$, which is not a valid formula.
\end{remark}

\section{Normal form}\label{sec:5}

Remember that a formula is compliant if the blockchain updates occur in the correct order. In this section, we establish a normal form theorem for our simple blockchain logic.

\begin{definition}
A \emph{base formula} is a formula that has one of the following forms (which include the case of no blockchain updates):
\begin{enumerate}
\item $[i_1,A_1]\ldots[i_m,A_m] \bot$
\item $[i_1,A_1]\ldots[i_m,A_m] P$ with $P \in \atomP \cup \atomQ$
\item $[i_1,A_1]\ldots[i_m,A_m] \Box B$
\end{enumerate}
Formulas in \emph{normal form} are given as follows:
\begin{enumerate}
\item
each compliant base formula is in normal form
\item
if $F$ and $G$ are in normal form, then so is $F \to G$.
\end{enumerate}
\end{definition}

\begin{remark}
As an immediate consequence of this definition, we obtain that for each formula~$F$,
\[
\text{if $F$ is in normal form, then $F$ is compliant.}
\]
\end{remark}

The following theorem states that for each formula, there is a provably equivalent  formula in normal form.
\begin{theorem}\label{th:equiv:1}
For each $\langB$-formula $F$, there is an $\langB$-formula $G$ in normal form such that\/ $\vdash F \leftrightarrow G$.
\end{theorem}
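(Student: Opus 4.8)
The plan is to prove the theorem by induction on the size (total number of symbols) of $F$, exploiting the fact that every $\langB$-formula decomposes uniquely as a leading \emph{update prefix} $[i_1,A_1]\cdots[i_m,A_m]$ (possibly empty) followed by a \emph{head} that is either a leaf ($\bot$, an atom in $\atomP \cup \atomQ$, or a box formula $\Box B$) or an implication $F_1 \to F_2$; note that by the grammar no update ever occurs inside a $\Box$, so the heads really are the leaves of the update structure. The idea is to push the whole prefix down towards the leaves using $(\mathsf{A4})$, and to sort each prefix that has reached a leaf into non-decreasing index order using $(\mathsf{A6})$, so that the result is an implicational combination of compliant base formulas, i.e.\ a formula in normal form.

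Concretely, I would distinguish two cases on the head. If the head is a leaf $X$, then $F = [i_1,A_1]\cdots[i_m,A_m]X$ is already a base formula, but its prefix need not have non-decreasing indices. I would sort the indices by bubble sort: whenever two adjacent blocks carry indices $a > b$, the appropriate instance of $(\mathsf{A6})$ -- whose built-in context $[h_1,C_1]\cdots[h_k,C_k]$ lets the swap be performed at any depth -- proves equivalence with the formula obtained by swapping, and since $a \neq b$ its side condition is met. Each swap strictly decreases the number of inversions, so the process terminates, and chaining the single-swap equivalences by propositional reasoning ($(\mathsf{PT})$ and $(\mathsf{MP})$) yields $\vdash F \leftrightarrow G$ with $G$ a base formula whose prefix has non-decreasing indices, hence compliant and in normal form. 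If instead the head is $F_1 \to F_2$, I would first apply $(\mathsf{A4})$ to get $\vdash F \leftrightarrow ([i_1,A_1]\cdots[i_m,A_m]F_1 \to [i_1,A_1]\cdots[i_m,A_m]F_2)$; both components are strictly smaller than $F$, so the induction hypothesis supplies normal forms $H_1,H_2$ together with the provable equivalences, and combining these propositionally gives $\vdash F \leftrightarrow (H_1 \to H_2)$, which is in normal form by definition.

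The step I expect to be the main obstacle -- and which dictates the prefix-oriented formulation above -- is the handling of nested updates. The tempting move for a formula $[i,A]F'$ is to normalize the immediate subformula $F'$ to some $G'$ and then lift the equivalence under the update via $(\mathsf{SUB})$. This fails, because $[i,A]G'$ need not be compliant when $G'$ contains an update of index below $i$, so $(\mathsf{SUB})$ is inapplicable; worse, such a lifted equivalence can be outright invalid, as the preceding remarks illustrate with formulas like $[2,\top]([1,P]Q2 \leftrightarrow Q2)$. I sidestep this entirely by never reasoning underneath a single outer update: the maximal leading prefix is always manipulated as a whole, so every rewrite is a direct instance of $(\mathsf{A4})$ or $(\mathsf{A6})$ (both stated with an arbitrary context prefix) glued together by purely propositional steps, and $(\mathsf{SUB})$ is never invoked. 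The only routine obligations that remain are to check that the prefix/head decomposition is exhaustive, that the bubble sort only ever swaps blocks of distinct index, and that a base formula is compliant precisely when its prefix indices are non-decreasing.
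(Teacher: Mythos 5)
Your proposal is correct and follows essentially the same route as the paper's proof: decompose $F$ into a maximal leading update prefix and a head, distribute the prefix over implications with $(\mathsf{A4})$, and sort the prefix of each resulting base formula into non-decreasing index order with $(\mathsf{A6})$, never invoking $(\mathsf{SUB})$. The only inessential differences are that you induct on formula size where the paper uses structural induction with a subinduction on the head, and you spell out the bubble-sort/inversion-count argument that the paper leaves implicit in the phrase ``using axiom $(\mathsf{A6})$, we find a compliant base formula.''
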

\begin{proof}
We do an induction on the structure of $F$ and distinguish the following cases:
\begin{enumerate}
\item The cases when $F = \bot$, $F \in \atomP \cup \atomQ$, or $F = \Box B$ are trivial.
\item $F = G \to H$.  By I.H., there are $G'$ and $H'$ in normal form such that $\vdash G \leftrightarrow G'$ and $\vdash H \leftrightarrow H'$. Hence
for $F'  :=G' \to H'$, we find $\vdash F \leftrightarrow F'$ and $F'$ is in normal form.
\item $F = [i_1,A_1]\ldots[i_k,A_k] G$ with $G$ not of the form $[i_{k+1},A_{k+1}] G'$. Subinduction on $G$. We distinguish:
\begin{enumerate}
\item
$G = \bot$,  $G = P \in \atomP \cup \atomQ$, or $G = \Box B$.
In this case, $F$ is a base formula. Using axiom~$\mathsf{(A6)}$, we find a compliant base formula~$F'$ such that $\vdash F \leftrightarrow F'$.
\item
$G = G' \to G''$. Then by axiom~$\mathsf{(A4)}$
\[
\vdash F \leftrightarrow 
([i_1,A_1]\ldots[i_k,A_k] G' \to [i_1,A_1]\ldots[i_k,A_k] G'').
\]
Moreover, by I.H., there are $H'$ and $H''$ in normal form such that 
\[
\vdash H' \leftrightarrow [i_1,A_1]\ldots[i_k,A_k] G' 
\]
and 
\[
\vdash H'' \leftrightarrow [i_1,A_1]\ldots[i_k,A_k] G''. 
\]
We find that $H:=H' \to H''$ is in normal form and $\vdash F \leftrightarrow H$.
\qed
\end{enumerate}
\end{enumerate}
\end{proof}

\section{Completeness}\label{sec:6}

We first show that
$\BCL$ is complete for modal formulas.
The modal language~$\langM$ consists of all update-free $\langB$-formulas. Formally, $\langM$ is given by the following grammar
\[
F ::= \bot \ |\  P \ |\ Q \ |\ F \to F \ |\ \Box A \quad,
\]
where $P \in \atomP$, $Q \in \atomQ$, and $A \in \langCl$.

We need the collection~$\BCL^\Box$ of all $\BCL$ axioms that are given in $\langM$. The usual satisfaction relation for Kripke models is denoted by~$\models_\Box$.

\begin{lemma}\label{l:modalComp:1}
For each $\langM$-formula $F$ we have
\[
\text{$F$ is valid \quad implies \quad $\vdash F$.}
\]
\end{lemma}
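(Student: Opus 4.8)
The plan is to prove the contrapositive by constructing a blockchain countermodel, following the standard Henkin method adapted to our semantics. First I would fix the notion of $\BCL^\Box$-consistency (derivability from the $\langM$-axioms using $(\mathsf{MP})$, $(\mathsf{NEC})$, $(\mathsf{SUB})$; note that in $\langM$ every formula is compliant, so $(\mathsf{SUB})$ carries no side condition) and record the routine facts about maximal consistent sets: a Lindenbaum argument extends any consistent set to a maximal consistent one, and such sets are closed under $(\mathsf{MP})$ and contain exactly one of $G$, $\lnot G$. Since $\not\vdash F$ is equivalent to the consistency of $\{\lnot F\}$, I obtain a maximal consistent set $w$ with $\lnot F \in w$, hence $F \notin w$.

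The heart of the argument is turning the single set $w$ into an initial model $\M = (\I, \BC, \seq{}, \val)$. I would read off the belief theory $\Gamma_w := \{A \in \langCl \mid \Box A \in w\}$ and set $\I := \Gamma_w$, the valuation $\val := \{P \in \atomP \mid P \in w\}$, and let $\BC$ be a sequence consisting solely of copies of $\top$ whose length is $n := \sup\{i \mid Qi \in w\}$ (so $n \in \natN \cup \{\omega\}$, and $\BC$ may be infinite, which our semantics permits). Before proving the truth lemma I must check that $\M$ is a genuine model: by axiom $(\mathsf{D})$ we have $\Box\bot \notin w$, so $\bot \notin \Gamma_w$, and using $(\mathsf{K})$ together with $(\mathsf{NEC})$ on classical tautologies one shows $\Gamma_w$ is closed under classical consequence; being deductively closed and $\bot$-free, $\Gamma_w$ is satisfiable, which gives condition~\eqref{eq:model:1} since $\set{\BC} \subseteq \{\top\} \subseteq \Gamma_w$ forces $\I \cup \set{\BC} = \Gamma_w$. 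Condition~\eqref{eq:model:2} is vacuous because $\PU = \seq{}$.

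I would then prove the truth lemma $\M \models G \iff G \in w$ by induction on the $\langM$-formula $G$. The Boolean cases are standard for maximal consistent sets. For $G = Qi$, axiom $(\mathsf{Q})$ makes $\{i \mid Qi \in w\}$ downward closed, hence an initial segment of $\natNP$, so $Qi \in w$ iff $i \le n = \len(\BC)$ iff $\M \models Qi$. For $G = \Box A$, deductive closure of $\Gamma_w$ gives $\M \models \Box A$ iff $\I \cup \set{\BC} = \Gamma_w \models A$ iff $A \in \Gamma_w$ iff $\Box A \in w$. Applying the truth lemma to $F$ and using $F \notin w$ yields $\M \not\models F$, so $F$ is not valid, completing the contrapositive.

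The main obstacle is verifying that $\Gamma_w$ is a consistent, deductively closed classical theory capturing $\Box$ exactly; this is where the axioms $(\mathsf{K})$, $(\mathsf{D})$ and the rule $(\mathsf{NEC})$ do their work, and where one must invoke compactness of classical logic to pass from $\Gamma_w \models A$ to $\Box A \in w$. A secondary point requiring care is making the length of $\BC$ match the $Q$-atoms in $w$ while keeping $\set{\BC}$ from disturbing the belief theory, which is why padding with $\top$ (and allowing an infinite blockchain when all $Qi \in w$) is convenient. If one prefers to route through the Kripke semantics $\models_\Box$, the same data $(\Gamma_w, \val, n)$ describes the corresponding serial Kripke countermodel, and the construction above is exactly its translation into an initial blockchain model.
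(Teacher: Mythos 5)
Your proposal is correct, and the countermodel it builds is exactly the paper's: $\I$ is the set of believed $\langCl$-formulas, $\BC$ is a sequence of $\top$'s whose length matches the initial segment of true $Q$-atoms (possibly infinite), $\val$ collects the true atoms of $\atomP$, $\PU = \seq{}$, and a truth lemma is proved by induction with the same case analysis (downward closure of the $Q$-atoms via $(\mathsf{Q})$, the $\Box$-case via deductive closure of the belief set). The genuine difference is where the countermodel data comes from. The paper appeals to completeness of the modal fragment $\BCL^\Box$ with respect to Kripke semantics as a known result: from $\not\vdash F$ it extracts a Kripke model $\K$ and world $w$ satisfying all $\BCL^\Box$-axioms but not $F$, and its truth lemma relates $\K,w \modelsM G$ to $\M \models G$. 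You instead run the Lindenbaum construction directly on the update-free proof system and read the model off a maximal consistent set $w$, so your truth lemma relates membership in $w$ to truth in $\M$. Your route is self-contained: it never mentions Kripke models and does not presuppose the canonical-model theorem for this two-sorted language, but it must redo the standard maximal-consistent-set work the paper outsources, namely deductive closure of $\Gamma_w = \{A \in \langCl \mid \Box A \in w\}$ via $(\mathsf{K})$, $(\mathsf{NEC})$, $(\mathsf{PT})$ and compactness, and satisfiability of $\Gamma_w$ via $(\mathsf{D})$ --- steps you correctly identify and carry out. One point worth stating explicitly: your Lindenbaum argument works with consistency relative to the $\langM$-fragment of $\BCL$, while the lemma's $\vdash$ is full $\BCL$-derivability; this is harmless, since non-derivability in $\BCL$ implies non-derivability in the subsystem, so your construction in fact yields the slightly stronger conclusion that every valid $\langM$-formula is already derivable in the update-free subsystem.
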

\begin{proof}
We show the contrapositive.
Assume $\not \vdash F$. Since $F$ is a modal formula, there is a Kripke model~$\K$ with a world $w$ such that 
\begin{equation}\label{eq:simpleComp:1}
\K,w \not\modelsM F
\end{equation}
and
\begin{equation}\label{eq:simpleComp:3}
\K,w \modelsM G \qquad \text{for all $G \in BCL^\Box$}.
\end{equation}

Based on the Kripke model $\K$, we construct an initial update model $\M = (\I,\BC,\seq{},\val)$ as follows.
Note that because of~\eqref{eq:simpleComp:3}, we have
$\K,w \modelsM Qi \to Qj$ if $j < i$.
Let $k$ be the least $i \in \natNP$ such that $\K,w \not\modelsM Qi$ if it exists and $k:=\omega$ otherwise.
We set:
\begin{enumerate}
\item $\I := \{ A \in \langCl \ |\ \K,w \modelsM \Box A \}$;
\item $\BC := \begin{cases}
	\seq{\top,\ldots,\top} \text{ such that } \len(\BC) = k-1 & \text{if $k<\omega$}\\
	\seq{\top,\top,\ldots} & \text{if $k=\omega$}
\end{cases}$
\item $\val := \{P \in \atomP \ |\  K,w \models P \}$.
\end{enumerate}
This definition of $\BC$ means that $\BC$ is an infinite sequence of $\top$ if $k = \omega$.

For each $\langM$-formula $G$ we have 
\begin{equation}\label{eq:simpleComp:2}
\K,w \modelsM G 
\quad\text{if and only if}\quad
\M \models G.
\end{equation}
We show \eqref{eq:simpleComp:2} by induction on the structure of $G$ and distinguish the following cases:
\begin{enumerate}
\item $G = P \in \atomP$. Immediate by the definition of $\val$.
\item $G = Qi \in \atomQ$.  If $k= \omega$, we have  $\K,w \modelsM Qi$ and, since $\len(\BC) = \omega$, also $\M \models Qi$.
If $k < \omega$, we have 
$\K,w \modelsM Qi$ iff $i \leq k-1 = \len(\BC)$ iff  $\M \models Qi$.
\item $G = \bot$. Trivial.
\item $G = G_1 \to G_2$. By induction hypothesis.
\item $G = \Box A$. If $\K,w \models \Box A$, then $\M \models \Box A$ by the definition of $\I$. If $\M \models \Box A$, then $\I \cup \set{\BC} \models A$. By the definition of $\BC$, this is $\I \models A$. Because $\I$ is deductively closed,  we get $A \in \I$, which yields $\K,w \models \Box A$.
\end{enumerate}
By \eqref{eq:simpleComp:1} and \eqref{eq:simpleComp:2} we conclude
$\M \not\models F$ as desired.
\qed
\end{proof}

We establish completeness for compliant formulas using a translation from compliant formulas to provably equivalent update-free formulas. 
We start with defining a mapping $h$ that eliminates update operators.
\begin{definition}
The mapping $\h$ from $\{[i,A]F \ |\ F \in \langM\}$ to $\langM$ is inductively defined by:
\begin{align*}
\h([i,A] \bot) &:= \bot\\
\h([i,A] P) &:= P \quad\text{for $P \in \atomP$}\\
\h([i,A] Qi) &:= \Acc(i,A) \lor Qi\\
\h([i,A] Qj)&:= Qj \quad\text{for $Qj \in \atomQ$ and $i\neq j$}\\
\h([i,A](F \to G) ) &:= \h([i,A]F)  \to \h([i,A]G) \\
\h([i,A] \Box B) &:= (\Acc(i,A) \land \Box(A \to B) ) \lor (\lnot \Acc(i,A) \land  \Box B)
\end{align*}
\end{definition}

The mapping $\h$ corresponds to the reduction axioms of $\BCL$. Thus it is easy to show the following lemma by induction on the structure of $F$.
\begin{lemma}\label{l:hred:1}
Let $F$ be an $\langB$-formula of the form $[i,A]G$ such that  $G\in \langM$.
We have that $\vdash F \leftrightarrow \h(F)$.
\end{lemma}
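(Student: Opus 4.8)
The plan is to prove Lemma~\ref{l:hred:1} by induction on the structure of the formula $G \in \langM$, showing in each case that $\vdash [i,A]G \leftrightarrow \h(G')$, where the right-hand side is exactly the clause of $\h$ matching the shape of $G$. The key observation is that the mapping $\h$ was defined precisely to mirror the reduction axioms $\mathsf{(A1)}$, $\mathsf{(A2)}$, $\mathsf{(A3.1)}$--$\mathsf{(A3.3)}$, $\mathsf{(A4)}$, $\mathsf{(A5.1)}$, and $\mathsf{(A5.2)}$, so each inductive case should reduce to combining the relevant axiom(s) with propositional reasoning inside $\BCL$. Since the statement is about provability ($\vdash$) rather than semantic validity, every step must be a purely syntactic derivation in the deductive system; I would lean on the fact that $\mathsf{(PT)}$ gives all propositional tautologies and $\mathsf{(MP)}$ lets me chain biconditionals.

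First I would dispatch the atomic cases. For $G = \bot$ we have $\h([i,A]\bot) = \bot$, and $\mathsf{(A1)}$ gives $[i,A]\bot \to \bot$, while $\bot \to [i,A]\bot$ is a propositional tautology, so the biconditional follows. For $G = P \in \atomP$ the biconditional is literally axiom $\mathsf{(A2)}$. For $G = Qj$ with $i \neq j$ it is axiom $\mathsf{(A3.3)}$. The interesting atomic case is $G = Qi$, where I would combine $\mathsf{(A3.1)}$ and $\mathsf{(A3.2)}$: a case split on $\Acc(i,A)$ (carried out propositionally, since $\BCL$ proves every instance of the law of excluded middle via $\mathsf{(PT)}$) shows that $[i,A]Qi$ is provably equivalent to $\Acc(i,A) \lor Qi$. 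Concretely, from $\Acc(i,A) \to ([i,A]Qi \leftrightarrow \top)$ and $\lnot\Acc(i,A) \to ([i,A]Qi \leftrightarrow Qi)$ one derives $[i,A]Qi \leftrightarrow (\Acc(i,A) \lor Qi)$ by a truth-table-level propositional argument. The $\Box B$ case is handled identically using $\mathsf{(A5.1)}$ and $\mathsf{(A5.2)}$, matching the two disjuncts of $\h([i,A]\Box B)$.

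The implication case $G = G_1 \to G_2$ is the only one requiring the induction hypothesis. Here I would first use axiom $\mathsf{(A4)}$ (with $k = 1$) to get $\vdash [i,A](G_1 \to G_2) \leftrightarrow ([i,A]G_1 \to [i,A]G_2)$. Then, by the induction hypothesis applied to $[i,A]G_1$ and $[i,A]G_2$, I have $\vdash [i,A]G_1 \leftrightarrow \h([i,A]G_1)$ and $\vdash [i,A]G_2 \leftrightarrow \h([i,A]G_2)$, and substituting these provable equivalences into the implication (a derived congruence step, justified by $\mathsf{(SUB)}$ or directly by propositional reasoning from the two biconditionals) yields $\vdash [i,A](G_1 \to G_2) \leftrightarrow (\h([i,A]G_1) \to \h([i,A]G_2))$, which is exactly $\vdash [i,A](G_1 \to G_2) \leftrightarrow \h([i,A](G_1\to G_2))$ by the definition of $\h$.

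I expect the main obstacle to be bookkeeping rather than mathematical depth: one must be careful that the induction hypothesis is only ever applied to formulas of the form $[i,A]G'$ with $G' \in \langM$, which holds here because $G_1, G_2 \in \langM$ whenever $G_1 \to G_2 \in \langM$, so $[i,A]G_1$ and $[i,A]G_2$ are legitimate inputs to $\h$. A secondary subtlety is justifying the substitution of provable equivalences inside the implication while staying within the syntactic calculus; if I invoke $\mathsf{(SUB)}$ I must check the compliance side-condition, but since the substitution context $X \to [i,A]G_2$ (and symmetrically) introduces no update operators outside the already-present $[i,A]$, compliance is automatic. Alternatively, and more cleanly, I would avoid $\mathsf{(SUB)}$ altogether and derive the congruence purely propositionally from the two induction-hypothesis biconditionals together with $\mathsf{(A4)}$, since from $a \leftrightarrow a'$, $b \leftrightarrow b'$ one propositionally proves $(a \to b) \leftrightarrow (a' \to b')$.
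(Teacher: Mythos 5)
Your proof is correct and follows exactly the route the paper intends: the paper leaves this lemma as an ``easy induction on the structure of $F$'' justified by the correspondence between $\h$ and the reduction axioms, and your case analysis (axioms $\mathsf{(A1)}$--$\mathsf{(A3.3)}$ for the atomic cases, $\mathsf{(A5.1)}$/$\mathsf{(A5.2)}$ for $\Box B$, and $\mathsf{(A4)}$ plus the induction hypothesis for implication) is precisely that induction, carried out syntactically via $\mathsf{(PT)}$ and $\mathsf{(MP)}$. Your choice to derive the congruence step propositionally rather than via $\mathsf{(SUB)}$ is a sensible refinement that sidesteps the compliance side-condition entirely.
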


We define a translation $\tr$ from $\langB$ to $\langM$
\begin{definition}
The mapping $\tr: \langB \to \langM$ is inductively defined by:
\begin{align*}
\tr(\bot) &:= \bot\\
\tr(P) &:= P \quad\text{for $P \in \atomP \cup \atomQ$}\\
\tr(F \to G) &:= \tr(F)  \to \tr(G) \\
\tr(\Box A) &:=\Box A\\
\tr([i,A]F) &:= \h([i,A] \tr(F))
\end{align*}
\end{definition}

\begin{lemma}\label{l:translation:1}
For each compliant formula $F$, we have
\[
\vdash F \leftrightarrow \tr(F).
\]
\end{lemma}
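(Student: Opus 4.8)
The plan is to proceed by induction on the structure of the compliant formula~$F$. The key point that makes the induction go through is that each immediate subformula we recurse into is again compliant: if $G\to H$ is compliant then so are $G$ and $H$, and if $[i,A]G$ is compliant then $G$ is compliant as well, because the nesting of update operators inside $G$ is left unchanged when passing to $[i,A]G$, so the ordering condition, which holds for $[i,A]G$, a fortiori holds for $G$. Thus the induction hypothesis is available for every subformula we need.

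The base cases and the propositional case are routine. For $F = \bot$, for $F = P$ with $P \in \atomP \cup \atomQ$, and for $F = \Box A$, we have $\tr(F) = F$ directly from the definition of $\tr$, so $\vdash F \leftrightarrow \tr(F)$ is an instance of $\mathsf{(PT)}$. For $F = G \to H$ the induction hypothesis gives $\vdash G \leftrightarrow \tr(G)$ and $\vdash H \leftrightarrow \tr(H)$; since $((G \leftrightarrow \tr(G)) \land (H \leftrightarrow \tr(H))) \to ((G \to H) \leftrightarrow (\tr(G) \to \tr(H)))$ is a propositional tautology, two applications of $\mathsf{(MP)}$ together with the definitional identity $\tr(G \to H) = \tr(G) \to \tr(H)$ yield $\vdash F \leftrightarrow \tr(F)$.

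The only interesting case is $F = [i,A]G$, where $\tr(F) = \h([i,A]\tr(G))$; note that $\tr(G) \in \langM$ by the definition of $\tr$, so $\h([i,A]\tr(G))$ is well-defined. First I would apply the induction hypothesis to the compliant subformula $G$ to obtain $\vdash G \leftrightarrow \tr(G)$. Next, taking the context $H(P) := [i,A]P$, I would invoke the rule $\mathsf{(SUB)}$ to conclude $\vdash [i,A]G \leftrightarrow [i,A]\tr(G)$. Finally, since $\tr(G) \in \langM$, Lemma~\ref{l:hred:1} gives $\vdash [i,A]\tr(G) \leftrightarrow \h([i,A]\tr(G)) = \tr(F)$, and chaining the two equivalences by propositional reasoning (via $\mathsf{(PT)}$ and $\mathsf{(MP)}$) closes the case.

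The main obstacle is justifying the application of $\mathsf{(SUB)}$, which may only be used when its conclusion $[i,A]G \leftrightarrow [i,A]\tr(G)$ is compliant. This is exactly where the update-freeness of $\tr(G)$ is needed. The left-hand side $[i,A]G$ is compliant by assumption, and the right-hand side $[i,A]\tr(G)$ contains the single update operator $[i,A]$ (as $\tr(G) \in \langM$ has none), hence is trivially compliant; since a biconditional places neither of its sides inside the scope of the other's update operators, the whole formula is compliant. I expect this side condition to be the single point requiring careful verification, but no deeper difficulty arises beyond it.
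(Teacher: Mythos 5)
Your proof is correct and follows essentially the same route as the paper's: induction on the structure of $F$, with the update case handled by applying the induction hypothesis to $G$, then $\mathsf{(SUB)}$ with context $[i,A]P$, then Lemma~\ref{l:hred:1}, and chaining the equivalences. Your additional care in checking that subformulas of compliant formulas are compliant, and that the conclusion of $\mathsf{(SUB)}$ is compliant because $\tr(G)$ is update-free, makes explicit exactly the side conditions the paper relies on implicitly.
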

\begin{proof}
The proof is by induction on the structure of $F$.
There are two interesting cases.
\begin{enumerate}
\item $F = G \to H$. By I.H.~we find $\vdash G \leftrightarrow \tr(G)$ and $\vdash H \leftrightarrow \tr(H)$. 
Thus we have 
\[
\vdash (G \to H) \leftrightarrow (\tr(G) \to \tr(H)), 
\]
which yields the desired result by $\tr(G) \to \tr(H) = \tr(G \to H)$.
\item $F = [i,A]G$. By  I.H.~we find $\vdash G \leftrightarrow \tr(G)$. Since 
$[i,A]G$ is compliant by assumption, we can use $\mathsf{(SUB)}$ to infer $[i,A]G \leftrightarrow[i,A]\tr(G)$. 
By Lemma~\ref{l:hred:1}, we know 
\[
\vdash [i,A]\tr(G) \leftrightarrow \h( [i,A]\tr(G) ).
\]
We finally conclude $ \vdash [i,A]G \leftrightarrow \h( [i,A]\tr(G) )$, which yields the claim since 
\[
\tr([i,A]F) = \h([i,A] \tr(F)).
\]
\qed
\end{enumerate}
\end{proof}

\begin{theorem}\label{th:comp:1}
For each compliant $\langB$-formula $F$ we have
\[
\text{$F$ is valid \quad implies \quad $\vdash F$.}
\]
\end{theorem}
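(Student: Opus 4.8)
The plan is to reduce completeness for arbitrary compliant formulas to the update-free case already handled by Lemma~\ref{l:modalComp:1}, using the translation $\tr$ as the bridge. Once the preparatory lemmas are in place, the entire argument is a short chain of implications, so I would keep it tight.

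First I would fix a compliant $\langB$-formula $F$ and assume it is valid. Since $F$ is compliant, Lemma~\ref{l:translation:1} gives $\vdash F \leftrightarrow \tr(F)$. Applying soundness of $\BCL$ (the preceding corollary) to this derivation, the biconditional $F \leftrightarrow \tr(F)$ is itself valid.

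Next I would transfer validity across the biconditional. For every initial model $\M$ we have both $\M \models F$, by validity of $F$, and $\M \models F \leftrightarrow \tr(F)$, by the previous step; the truth clause for the conditional then yields $\M \models \tr(F)$, so $\tr(F)$ is valid. Crucially, $\tr(F)$ lies in the update-free fragment $\langM$ by the definition of $\tr$, so Lemma~\ref{l:modalComp:1} applies and delivers $\vdash \tr(F)$. I would then close the loop: from $\vdash \tr(F)$ together with $\vdash F \leftrightarrow \tr(F)$, propositional reasoning (an instance of $\mathsf{(PT)}$ combined with $\mathsf{(MP)}$) yields $\vdash F$, as required.

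The genuine difficulty does not reside in this theorem but in the machinery it rests on: the reduction axioms $\mathsf{(A3.1)}$ through $\mathsf{(A5.2)}$, the correctness of $\tr$ on compliant formulas in Lemma~\ref{l:translation:1} (which itself depends on $\mathsf{(SUB)}$ being restricted to compliant conclusions), and the modal completeness of Lemma~\ref{l:modalComp:1}. Given those, the only point that still needs a moment of care is verifying that $\tr$ indeed maps into $\langM$, so that the modal completeness lemma is genuinely applicable; this is immediate from the clauses defining $\tr$ and $\h$, since each clause either stays within $\langM$ or recurses on strictly smaller formulas whose outputs are already in $\langM$.
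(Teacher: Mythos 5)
Your proof is correct and follows essentially the same route as the paper's: use Lemma~\ref{l:translation:1} to get $\vdash F \leftrightarrow \tr(F)$, transfer validity from $F$ to $\tr(F)$ via soundness, apply Lemma~\ref{l:modalComp:1} to the $\langM$-formula $\tr(F)$, and close with propositional reasoning. The only difference is that you spell out the validity-transfer step and the membership of $\tr(F)$ in $\langM$ in more detail than the paper does, which is fine.
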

\begin{proof}
Assume that $F$ is a valid and compliant  $\langB$-formula.
By Lemma~\ref{l:translation:1}, we know $\vdash F \leftrightarrow \tr(F)$. Hence by soundness of $\BCL$, we get that $\tr(F)$ is valid, too.
Since $\tr(F)$ is an $\langM$-formula, Lemma~\ref{l:modalComp:1} yields $\vdash \tr(F)$.
Using  Lemma~\ref{l:translation:1} again, we conclude $\vdash F$.
\qed
\end{proof}

Combining Theorem~\ref{th:equiv:1} and Theorem~\ref{th:comp:1} easily yields completeness for the full language. 
\begin{theorem}\
For each  $\langB$-formula $F$ we have
\[
\text{$F$ is valid \quad implies \quad $\vdash F$.}
\]
\end{theorem}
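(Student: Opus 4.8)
The plan is to reduce the general case to the completeness result for compliant formulas, Theorem~\ref{th:comp:1}, using the normal form theorem, Theorem~\ref{th:equiv:1}, as the bridge. The crucial glue is the observation recorded in the remark immediately following the definition of normal form: every formula in normal form is compliant. This is exactly what lets me hand a normal-form formula to Theorem~\ref{th:comp:1}, whose hypothesis demands compliance.

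Concretely, I would start with an arbitrary valid $\langB$-formula $F$ and invoke Theorem~\ref{th:equiv:1} to obtain a formula $G$ in normal form with $\vdash F \leftrightarrow G$. Next I would apply soundness of $\BCL$ to this derivation to conclude that $F \leftrightarrow G$ is valid; combined with the validity of $F$, this yields that $G$ is valid. Since $G$ is in normal form, it is compliant, so Theorem~\ref{th:comp:1} applies and gives $\vdash G$. Finally, from $\vdash F \leftrightarrow G$ and $\vdash G$ I would conclude $\vdash F$ by a single application of $\mathsf{(MP)}$ (after extracting $\vdash G \to F$ from the biconditional via a propositional tautology and $\mathsf{(MP)}$).

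There is no real obstacle here: the theorem is a straightforward assembly of machinery already proved. If anything merits attention, it is making explicit the step that turns validity of the \emph{biconditional} $F \leftrightarrow G$ together with validity of $F$ into validity of $G$ — this uses soundness to move from the syntactic derivation $\vdash F \leftrightarrow G$ to its semantic counterpart — and the step that relies on normal form implying compliance, which is what unlocks Theorem~\ref{th:comp:1}. Both are immediate, so the proof will be short.
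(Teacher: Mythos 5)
Your proposal is correct and follows exactly the paper's own argument: apply Theorem~\ref{th:equiv:1} to get a normal-form (hence compliant) $G$ with $\vdash F \leftrightarrow G$, use soundness to transfer validity from $F$ to $G$, invoke Theorem~\ref{th:comp:1} for $\vdash G$, and conclude $\vdash F$ from the provable biconditional. Your version is only slightly more explicit than the paper's in spelling out the normal-form-implies-compliant step and the final propositional reasoning.
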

\begin{proof}
Assume $F$ is a $\langB$-formula  that is valid.
By Theorem~\ref{th:equiv:1}, we find a compliant $\langB$-formula $G$ such that 
\begin{equation}\label{eq:fullComp:1}
\vdash F \leftrightarrow G. 
\end{equation}
Hence by soundness of $\BCL$, we know that $G$ is valid, too.
Applying Theorem~\ref{th:comp:1} yields $\vdash G$. 
We finally conclude $\vdash F$ by \eqref{eq:fullComp:1}.
\qed
\end{proof}

\section{Conclusion}\label{sec:7}

We have presented $\BCL$, a dynamic logic to reason about a simple blockchain model. Our semantics does not have the full complexity of the blockchains used in Bitcoin or Ethereum, yet it exhibits two key properties of blockchains: blockchain extensions must preserve consistency and blocks may be received in the wrong order. Note, however, that although receiving blocks in the wrong order is an important logical possibility, it only happens rarely in practice: in the Bitcoin protocol the average generation time of a new block is 10~minutes; the average time until a node receives a block is only 6.5~seconds~\cite{decker2013}.
%
%

In order to illustrate the dynamics of our simple blockchain logic, we state some valid principles of $\BCL$ in the following example.
\begin{example}
The following formulas are valid (and thus provable) in $\BCL$:
\begin{description}
\item[Persistence:] $\Box A \to [i,B]\Box A$. Beliefs are persistent, i.e., receiving a new block cannot lead to a retraction of previous beliefs.
\item[Consistency:] $[i,B]\lnot \Box \bot$. Receiving a new block cannot result in inconsistent beliefs.
\item[Success:] $\Acc(i,A) \to [i,A]\Box A$. If a block $[i,A]$ is acceptable, then $A$ is believed after receiving $[i,A]$.\footnote{We call this prinicple \emph{success}; but it is not related to the notion of a \emph{successful formula} as studied in dynamic epistemic logic, see, e.g.,~\cite{VanDitmarsch2006}.}
\item[Failure:] $(Qi \lor \lnot Q(i-1)) \to ([i,B]\Box A \leftrightarrow \Box A)$.  If the current length of the blockchain is not $i-1$, then receiving a block $[i,B]$ will not change the current beliefs.
\end{description}
\begin{proof}
\begin{enumerate}
\item
Persistence: $\Box A \to [i,B]\Box A$. 
Let $\M := (\I, \BC, \seq{}, \val)$ be an initial model and assume
$\M \models \Box A$. That is $\I  \cup \set{\BC} \models A$.
Let $ (\I, \BC', \PU', \val) := \M^{[i,B]}$.
We find that $\set{\BC} \subseteq \set{\BC'}$. 
Therefore, $\I  \cup \set{\BC'} \models A$, hence $\M^{[i,B]}\models \Box A$ and $\M \models  [i,B]\Box A$.
\item
Consistency: $[i,B]\lnot \Box \bot$.
We let $\M := (\I, \BC, \seq{}, \val)$ be an initial model.
Further, we set 
$
(\I, \BC', \PU', \val) := \M^{[i,B]}
$.
By Lemma~\ref{l:welldef:1} we know that $\I \cup \set{\BC'}$ is satisfiable, i.e., $\I \cup \set{\BC'} \not\models  \bot$.
Hence we have $\M^{[i,B]} \models \lnot \Box \bot$, which is $\M \models  [i,B]  \lnot \Box \bot$.
\item
Success: $\Acc(i,A) \to [i,A]\Box A$.
Let $\M := (\I, \BC, \seq{}, \val)$ be an initial model and assume $\M \models \Acc(i,A)$.
Let  $ (\I, \BC', \PU', \val) := \M^{[i,A]}$.
By
Lemma~\ref{l:updates:1}, we know $\BC' = \BC \apnd A$.
Thus $\I \cup \set{\BC'} \models A$ and, therefore $\M^{[i,A]} \models \Box A$, which is $\M \models  [i,A]  \Box A$.
\item
Failure: $(Qi \lor \lnot Q(i-1)) \to ([i,B]\Box A \leftrightarrow \Box A)$.   
Again, we let $\M := (\I, \BC, \seq{}, \val)$ be an initial model and assume $\M \models  Qi \lor \lnot Q(i-1)$.
We find that $\M \not \models \Acc(i,B)$. Indeed, 
\[
\text{$\M \models  Qi$ implies $\M \not \models \Acc(i,B)$}
\]
and  
\[
\text{$\M \models  \lnot Q(i-1)$ implies $i>1$ and $\M \not \models \Acc(i,B)$.}
\]
Let  $ (\I, \BC', \PU', \val) := \M^{[i,B]}$. By Lemma~\ref{l:updates:1}, we know $\BC' = \BC$.
Therefore, $\M^{[i,B]} \models \Box A$ if and only if $\M\models \Box A$, which yields  $\M \models  [i,B]\Box A \leftrightarrow \Box A$.
\qed
\end{enumerate}
\end{proof}
\end{example}

There  are still many open issues in epistemic blockchain logic. Let us mention three of them. First of all, although blockchains are called \emph{chains}, the data structure that is actually used is more tree-like and there are different options how to choose the valid branch: Bitcoin simply uses the branch that 
has the greastest proof-of-work effort invested in it~\cite{nakamoto} (for simplicity we can think of it as the longest branch);
but recent research shows that the GHOST rule~\cite{Sompolinsky2015} (used, e.g.,~in Ethereum~\cite{Wood2017}) provides better security at higher transaction throughput. We plan to extend $\BCL$ so that it can handle tree-like structures and the corresponding forks of the chain. In particular, this requires some form of probability logic to model the fact that older transactions have smaller probability of being reversed~\cite{DoubleSpend,nakamoto,Sompolinsky2015}.

One of the purposes of blockchains is to provide a data structure that makes it possible to achieve common knowledge among a group of agents in  a distributed system. Logics of common knowledge are 
well-understood~\cite{bs09,FHMV95,jks07,HM95} and we believe that a fully developed blockchain logic should support multiple agents and common knowledge operators.

In a multi-agent setting, each agent (node) has her own instance of a blockchain. Justification logics~\cite{Art01BSL} could provide a formal approach to handle this. Evidence terms could represent blockchain instances and those instances can be seen as justifying the agents' knowledge about the accepted transactions. This approach would require to develop new dynamic justification logics~\cite{BucKuzStu14Realizing,Ren11JLC,KuzStu13LFCS}.
Moreover, if the underlying blockchain model supports forks of the chain, then we need justification logics with probability operators~\cite{KMOS15}.

%


\end{document}